\newcommand{\xb}{{\boldsymbol x}}
\newcommand{\yb}{{\boldsymbol y}}
\newcommand{\yp}{y^\prime}
\newcommand{\alb}{{\boldsymbol \alpha}}
\newcommand{\beb}{{\boldsymbol \beta}}
\newcommand{\kp}{{k^\prime}}
\newcommand{\lp}{{l^\prime}}
\newcommand{\anglink}{{\left \langle i \right \rangle_k^n}} 
\newcommand{\angljnpk}{{\left \langle j \right \rangle_k^{n^\prime}}} 
\newcommand{\angljnkp}{{\left \langle j \right \rangle_\kp^n}} 
\newcommand{\anglikkp}{{\left \langle i \right \rangle_\kp^k}} 
\newcommand{\anglikl}{{\left \langle i \right \rangle_\lambda^k}}
\newcommand{\matr}[1]{\mathbf{#1}}
\newcommand{\pt}{\tilde{p}}
\newcommand{\dkl}{\mathcal{D}_{\mathcal{K}\mathcal{L}}}
\newcommand{\oset}{\varnothing}
\newcommand{\vnu}{{\boldsymbol \nu}}
\newcommand{\vmu}{{\boldsymbol \mu}}
\newcommand{\fknu}{F_k(\{\nu\})}
\newcommand{\tp}{t^\prime}
\newcommand{\ovnu}{\overline{\nu}}
\newcommand{\ovf}{\overline{F}}
\newcommand{\bb}{{\boldsymbol b}}
\newcommand{\mb}{{\boldsymbol m}}
\newcommand{\deltaHI}{\hat{\delta}}
\newcommand*{\rightharpoonupfill@}{%
  \arrowfill@\relbar\relbar\rightharpoonup
}
\newcommand*{\leftharpoondownfill@}{%
  \arrowfill@\leftharpoondown\relbar\relbar
}
\newcommand{\xrightleftharpoons}[2][]{%
  \ensuremath{%
    \mathrel{%
      \settoheight{\dimen@}{\raise 2pt\hbox{$\rightharpoonup$}}%
      \setlength{\dimen@}{-\dimen@}%
      \edef\CA@temp{\the\dimen@}%
      \settoheight\dimen@{$\rightleftharpoons$}%
      \addtolength{\dimen@}{\CA@temp}%
      \raisebox{\dimen@}{%
        \rlap{%
          \raisebox{2pt}{%
            $%
            \ext@arrow 0359\rightharpoonupfill@{\hphantom{#1}}{#2}%
            $%
          }%
        }%
        \hbox{%
          $%
          \ext@arrow 3095\leftharpoondownfill@{#1}{\hphantom{#2}}%
          $%
        }%
      }%
    }%
  }%
}
\let\OldStatex\Statex
\renewcommand{\Statex}[1][3]{%
  \setlength\@tempdima{\algorithmicindent}%
  \OldStatex\hskip\dimexpr#1\@tempdima\relax}
\algnewcommand{\LineComment}[1]{\State \(\triangleright\) #1}
\newtheorem{prop}{Proposition}
\begin{document}

\title{Learning Dynamic Boltzmann Distributions as Reduced Models of Spatial Chemical Kinetics}

\author{Oliver K. Ernst}
\email{oernst@ucsd.edu.}
\affiliation{
Department of Physics, University of California at San Diego, La Jolla, California
}
\author{Thomas Bartol}
\email{bartol@salk.edu.}
\author{Terrence Sejnowski}
\email{terry@salk.edu.}
\affiliation{
Salk Institute for Biological Studies, La Jolla, California
}
\author{Eric Mjolsness}%
\email{emj@uci.edu.}
\affiliation{
Departments of Computer Science and Mathematics, and Institute for Genomics and Bioinformatics, University of California at Irvine, Irvine, California
}

\date{\today}


\begin{abstract}

Finding reduced models of spatially-distributed chemical reaction networks requires an estimation of which effective dynamics are relevant.
We propose a machine learning approach to this coarse graining problem, where a maximum entropy approximation is constructed that evolves slowly in time. 
The dynamical model governing the approximation is expressed as a functional, allowing a general treatment of spatial interactions. 
In contrast to typical machine learning approaches which estimate the interaction parameters of a graphical model, we derive Boltzmann-machine like learning algorithms to estimate directly the functionals dictating the time evolution of these parameters.
By incorporating analytic solutions from simple reaction motifs, an efficient simulation method is demonstrated for systems ranging from toy problems to basic biologically relevant networks.
The broadly applicable nature of our approach to learning spatial dynamics suggests promising applications to multiscale methods for spatial networks, as well as to further problems in machine learning.

\end{abstract}

\maketitle


\section{\label{sec:1}Introduction}


\subsection{\label{sec:1.A}Model Reduction of Statistical Many-Body Systems}

Master equations are broadly applicable to stochastic systems in biology. 
For reaction-diffusion systems, the solution to the chemical master equation (CME) fully characterizes the probability distribution over system states and all observables at all times~\cite{gardiner_1976}. 
However, solving the CME for relevant moments is challenging when the interactions of two or more reagents lead to non-linear differential equation systems for the moments, and even more challenging when considering spatially distributed systems. 
A wealth of analytical and numerical approaches have been developed in pursuit of approximate solutions, each of which is optimally suited for a distinct dynamical regime~\cite{gillespie_2013}. 
For example, at the low and spatially heterogeneous concentrations of molecules present in dendritic spines in synapses, particle-based methods may best describe the highly stochastic signaling activity~\cite{bartol_2015,mcell_1,mcell_2}. 
In the larger volumes such as the dendritic shaft, simpler geometries and higher concentrations allow more efficient partial differential equation (PDE) methods. 
It remains an open problem to develop a modeling framework that is able to flexibly transition across different dynamical regimes, or to describe their coexistence in the same spatial domain.

One key problem is that the number of states appearing in the CME increases exponentially with the number of underlying random variables describing the system. 
This system state space explosion poses a computational challenge for Monte Carlo algorithms such as the popular Gillespie stochastic simulation algorithm (SSA)~\cite{gillespie_1977}, requiring the sampling of a sufficiently large number of trajectories to estimate observables.

The direct estimation of observables also poses a challenge. 
Generally, many-body systems result in a hierarchy of moments (analogous to a BBGKY hierarchy~\cite{bogoliubov_1946,kirkwood_1946,kirkwood_1947}), where the differential equation for any given moment depends on the current value of higher order ones. 
This requires the use of a moment closure technique (see Ref.~\onlinecite{johnson_2015} for review), but a poor choice here can unduly restrict modeling of the rich correlation structures available.

Machine learning approaches present an opportune setting for addressing these problems, because a central goal of these approaches is the estimation of the structures underlying complex correlations. 
For example, machine learning has recently been proposed to approximate quantum systems~\cite{carleo_2017}. 
Previous applications to chemical systems include the predictions of molecular reactions~\cite{kayala_2011} and synaptic activity~\cite{montes_2013}. 
However, to our knowledge no general formulation for learning chemical dynamics exists that incorporates the complex spatial interactions central to many problems in biology. 
In this work, we present such a framework and derive algorithms for simulating reaction-diffusion systems in continuous space. 
This has promising general implications for both moment closure and model reduction of the CME, and of other generally spatially distributed systems.


\subsection{\label{sec:1.B}Inferring Markov Random Fields for Reduced Dynamics}

Previous work has shown the applicability of machine learning to model reduction. 
In particular, the ``Graph Constrained Correlation Dynamics" (GCCD) framework~\cite{johnson_2015} uses a Markov Random Field (MRF) of plausible state variables and interactions as input, incorporating human expertise into the model reduction process. 
The probability distribution associated with this MRF is written in a form that separates the time evolution $\mu(t)$ from the graph structure $V_\alpha(s)$:
\begin{equation}
\pt(s, t; \{ \mu \}) = \frac{1}{\mathcal{Z}(\mu(t))} \exp [ - \sum_\alpha \mu_\alpha(t) V_\alpha(s) ] .
\label{eq:gccd}
\end{equation}
This time-evolving mean field model can be learned separately at each time-point using the well known Boltzmann Machine (BM) learning algorithm~\cite{ackley_1985}, and then approximated with its own dynamics for $\mu(t)$. 
For a suitably chosen MRF, the resulting temporal dynamics of $\mu(t)$ result in a large degree of model reduction, as reported in Ref.~\onlinecite{johnson_2015}. 

In the next section, we formalize these ideas and extend to the spatial domain a general variational problem for estimating the dynamical model dictating the time evolution of a Boltzmann distribution. 
The importance of spatial networks has been widely studied~\cite{durrett_1994,barthelemy_2011}, with continued interest in mean-field methods, such as for evolving networks~\cite{king_2017}. 
Our formulation using functionals presents a flexible framework for capturing the dynamics of any desired spatial correlations in the system. 
This leads to algorithms closely related to a BM, with a modified learning rule for directly estimating the functionals dictating the time evolution of the mean-field model. 
We anticipate that such an approach is broadly applicable to other spatially organized networks, and will have further practical applications in machine learning.


\section{\label{sec:2}Learning Algorithms for Model Reduction}

This section is organized as follows. 
Section~\ref{sec:2.A} reviews recent developments that have enabled the derivation of stochastic simulation algorithms for chemical kinetics in the Doi-Peliti formalism, and introduces further extensions for describing spatial dynamics. 
Section~\ref{sec:2.B} introduces time-evolving Boltzmann distributions as reduced models, and Section~\ref{sec:2.C} sets up the associated variational problem for their dynamics. 
For the case of well-mixed systems, this is solvable by an algorithmic approach, as shown in Section~\ref{sec:2.D}. 
Section~\ref{sec:2.E} treats some analytic solutions for simple systems, used to guess parameterizations needed to algorithmically solve the general spatially heterogeneous case in Section~\ref{sec:2.F}.


\subsection{\label{sec:2.A}Field Theoretic Approaches to Deriving Stochastic Simulation Algorithms}

An equivalent description parallel to the CME is the  quantum-field-theoretic Doi-Peliti~\cite{doi_1976_1,doi_1976_2,peliti_1985} operator algebra formalism (see e.g. Ref.~\onlinecite{mattis_1998} for review). 
Extensions to this formalism have developed it as a natural framework for deriving stochastic simulation algorithms of chemical kinetics. 
In particular:
\begin{enumerate}
\item The introduction of parameterized objects has generalized bare molecules to allow the description of macromolecular complexes such as phosphorylation states~\cite{mjolsness_2006,mjolsness_2013_1}, or other structures with size, type, and other internal parameters that affect their dynamics.
\item Dynamically graph-linked collections of objects~\cite{mjolsness_2010} have allowed collections of inter-related objects and extended objects to associate and dissociate according to specified rules and propensities.
\item Differential operators have been introduced that express differential equations and stochastic differential equations~\cite{mjolsness_2006,mjolsness_2013_1}, as in the Lie Series~\cite{singer_1990}.
\end{enumerate}

These innovations lead naturally to the rederivation of the popular Gillespie SSA from the CME, and from there to extensions to the parametric, graph-matching, and mixed-dynamics cases~\cite{mjolsness_2013_1}.
Here, we consider further extensions to this formalism to develop {\it model reduction} techniques for spatial reaction-diffusion systems.

The raising and lowering operators, $\hat{a}$ and $a$, create and destroy identical particles of a single species. 
For states consisting of a single species distributed on a discrete lattice $\ket{\{ n \}}$, where $\{ n \}$ describes the occupancy of each lattice site, the action of these operators on the $i$-th lattice site (in some ordering) is
\begin{equation}
\begin{split}
\hat{a}_i \ket{\{ n \}} &= \ket{\{ \dots, n_{i-1}, n_i + 1, n_{i+1}, \dots \}} , \\
a_i \ket{\{ n \}} &= n_i \ket{\{ \dots, n_{i-1}, n_i - 1, n_{i+1}, \dots \}} .
\end{split}
\end{equation}
Further, they satisfy the Heisenberg algebra commutation relationship $[a_i, \hat{a}_j] = \delta_{ij}$ where $\delta_{ij}$ is the Kronecker delta function.
We note that these are different from the ladder operators in quantum mechanics in that $a_i$ is not conjugate to $\hat{a}_i$. 
However, in the present context they are of key importance as they capture mass action chemical kinetics.

These operators admit an equivalent generating function representation:
\begin{equation}
\ket{ \{ n \} } \rightarrow \prod_i z_i^{n_i} ,
\end{equation}
where the product runs over all spatial lattice sites. 
Then the operators may be represented as:
\begin{equation}
\begin{split}
\hat{a}_i &\rightarrow z_i , \\
a_i &\rightarrow \frac{\partial}{\partial z_i} .
\end{split}
\end{equation}

In the spatially continuous case, let the state of the system be denoted by $\ket{n,\alb,\xb}$, consisting of $n$ particles at locations $\xb$, consisting of $n$ positions in 3D space, with species labels $\alb$, also of length $n$. 
The equivalent generating function representation is:
\begin{equation}
\ket{n,\alb,\xb} \rightarrow \prod_{i=1}^n z(\alpha_i,x_i).
\end{equation}
The raising and lowering operators are now:
\begin{equation}
\begin{split}
\hat{a}(\alpha,x) &\rightarrow z (\alpha,x), \\
a(\alpha,x) &\rightarrow \frac{\delta}{\delta z(\alpha,x)},
\end{split}
\end{equation}
where, switching from the discrete to the continuous case, partial derivatives for the annihilation operator turn into functional derivatives. 
Importantly, the CME
\begin{align}
\dot{p}(n,\alb,\xb,t) = \matr{W} p(n,\alb,\xb,t)
\label{eq:cme}
\end{align}
can still be written in an equivalent form where the time-evolution operator $\matr{W}$ is polynomial in the ladder operators,
encoding the set of reactions and rates. 
We make use of these extensions in the following sections where analytic forms for differential equations of moments are required.


\subsection{\label{sec:2.B}Reduced States in a Dynamic Boltzmann Distribution}

Let $\ket{n,\alb,\xb,t}$ denote the true state of the system at time $t$. 
In the spirit of a MRF, construct states in a coarse-scale model:
\begin{widetext}
\begin{equation}
\ket{ \{ \nu_k \}_{k=1}^K, t } = \frac{1}{\mathcal{Z} \left [ \{ \nu_k \}_{k=1}^K \right ]} \sum_{n=0}^\infty \sum_\alb \int d\xb \; \exp \left [ 
- \sum_{k=1}^K \sum_\anglink \nu_k (\alb_\anglink, \xb_\anglink, t) \right ] \ket{n, \alb, \xb,t} , 
\label{eq:reducedKet}
\end{equation}
\end{widetext}
where $\anglink = \{ i_1 < i_2 < \dots < i_k : i \in [1,n] \}$ denotes ordered subsets of $k$ indexes each in $\{ 1, 2, \dots, n \}$, and $\nu_k(\alb_\anglink, \xb_\anglink, t)$ are $k$-particle interaction functions up to a cutoff order $K$. 
We note that $\{ \dots \}_{k=1}^K$ is used to denote an index-ordered set in this context. 
This expansion of $n$-body interactions is a specific case of more general dimension-wise decompositions, such as analysis of least variance (ANOVA)~\cite{griebel_2005}. 
The probability of being in a state $\ket{n,\alb,\xb,t}$ is given by a dynamic and instantaneous Boltzmann distribution:
\begin{equation}
\pt (n,\alb,\xb,t) = \braket{n,\alb,\xb,t}{\{ \nu_k \}_{k=1}^K, t }
= \frac{
\exp [ - \sum_{k=1}^K \sum_\anglink \nu_k (\alb_\anglink, \xb_\anglink, t) ]
}{
\mathcal{Z} \left [ \{ \nu_k \}_{k=1}^K \right ]
}.
\label{eq:boltz}
\end{equation}

The true probability distribution $p(n,\alb,\xb,t)$ evolves according to the CME~(\ref{eq:cme}). 
To describe the time evolution of the reduced model, introduce a set of functionals $\{ {\cal F}_k \}_{k=1}^K$, forming a differential equation system for the interaction functions $\{ \nu_k \}_{k=1}^K$:
\begin{equation}
\frac{d}{dt} \nu_k (\alb_\anglink, \xb_\anglink, t) 
= {\cal F}_k \left [ \{ \nu(\alb, \xb,t) \} \right ] ,
\label{eq:ade}
\end{equation}
where
\begin{equation}
\{ \nu(\alb, \xb,t) \} = \left \{ \nu_\kp(\alb_\angljnkp, \xb_\angljnkp, t) \; \forall \; \angljnkp : 1 \leq \kp \leq K \right \}
\end{equation}
denotes all possible $\nu$ functions evaluated at the given arguments. 
Here, the right hand side ${\cal F}_k$ may be a global functional, in the sense that the arguments $\{ \nu(\alb, \xb,t) \}$ are not restricted to the arguments appearing on the left hand side of~(\ref{eq:ade}). 
We consider particular local parameterizations of this general form in Section~\ref{sec:2.F}.

In addition to the connection to MRFs, we note several advantages of the form of this reduced model~(\ref{eq:boltz},\ref{eq:ade}):
\begin{enumerate}
\item Since the states $\ket{ \{ \nu_k \}_{k=1}^K, t }$ define a grand canonical ensemble (GCE),~(\ref{eq:boltz}) exactly describes equilibrium systems, and is expected to reasonably approximate systems approaching equilibrium.
\item If the interactions between two groups of particles are independent, their joint probability distribution equals the product of their probabilities, 
and their interaction functions $\nu_k$ in~(\ref{eq:boltz}) sum. 
The Boltzmann distribution thus preserves the locality of interactions. 
\item A further important result pertains to linearity, stated in the following proposition.
\end{enumerate}

\begin{prop}
Given a reaction network and a fixed collection of $K$ interaction functions $\{\nu_k\}_{k=1}^K$, the linearity of the CME in reaction operators $\dot{p} = \sum_r \matr{W}^{(r)} p$ extends to the functionals ${\cal F}_k = \sum_r {\cal F}_k^{(r)}$.
\end{prop}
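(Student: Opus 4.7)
The plan is to exhibit each ${\cal F}_k$ as the solution of a variational problem that is \emph{linear} in the CME flow $\matr{W}\pt$, so that the additive decomposition $\matr{W} = \sum_r \matr{W}^{(r)}$ induces a parallel decomposition of ${\cal F}_k$. Concretely, the reduced dynamics~(\ref{eq:ade}) will be constructed in Section~\ref{sec:2.C} by projecting $\matr{W}\pt$ onto the tangent space of the exponential family~(\ref{eq:boltz}), and that projection operator depends only on $\pt$ (hence on $\{\nu\}$), not on the generator $\matr{W}$.

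First I would differentiate~(\ref{eq:boltz}) in $t$ via the chain rule through the parameters $\nu_k(t)$. Because $\pt$ is an exponential family in $\{\nu_k\}$, one obtains
\begin{equation}
\partial_t \pt \;=\; -\,\pt\, \sum_{k=1}^K \left( \sum_\anglink \dot{\nu}_k \;-\; \left\langle \sum_\anglink \dot{\nu}_k \right\rangle_{\pt} \right),
\end{equation}
which is manifestly linear in $\{\dot{\nu}_k\} = \{{\cal F}_k\}$. Matching this against $\matr{W}\pt$ in a projected sense --- for instance via a Fisher-information / KL projection, or by equating the time derivatives of the expectations of the sufficient statistics $\sum_\anglink \nu_k$ --- yields a linear system of the schematic form
\begin{equation}
\sum_{k'} G_{k k'}[\{\nu\}]\, {\cal F}_{k'} \;=\; \bigl\langle h_k,\, \matr{W}\pt \bigr\rangle,
\end{equation}
in which the Gram matrix $G$ is built from $\pt$ and its derivatives with respect to $\nu_k$ and therefore depends only on $\{\nu\}$, while the right-hand side is linear in $\matr{W}\pt$.

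Finally, substituting $\matr{W} = \sum_r \matr{W}^{(r)}$ and invoking linearity of the inner product in its second slot splits the source term as a sum over reactions; inverting the $\matr{W}$-independent operator $G$ preserves this sum and delivers ${\cal F}_k = \sum_r {\cal F}_k^{(r)}$, with each ${\cal F}_k^{(r)}$ given by the same construction applied to the single-reaction generator $\matr{W}^{(r)}$.

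The main obstacle is purely structural: the argument goes through only insofar as the variational principle of Section~\ref{sec:2.C} is quadratic in $\dot{\nu}$ and linear in $\matr{W}\pt$, so that the associated solution operator is linear in the CME generator. A mild supplementary check is nondegeneracy of $G$ so that the linear system has a unique solution; otherwise a Moore--Penrose pseudoinverse, still linear in $\matr{W}\pt$, gives the same conclusion. No further reaction-specific information enters, so the linearity in $r$ is inherited automatically.
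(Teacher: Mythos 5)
Your proposal is correct and is essentially the paper's own argument in a different (dual) notation: the paper posits the inverse Ising solution $\vnu_l = \phi_l(\vmu)$ and differentiates it in time, so its Jacobian $\partial \phi_l / \partial \vmu_{\lp}$ is precisely the inverse of your Gram/Fisher matrix $G_{kk'}[\{\nu\}]$, and in both cases this object depends only on the fixed exponential family $\{\nu_k\}_{k=1}^K$ and not on $\matr{W}$. Both proofs then finish identically — the moment dynamics are additive over reactions by linearity of the CME and of expectations, and pushing that sum through the $\matr{W}$-independent linear map (your $G^{-1}$, the paper's $\partial \phi / \partial \vmu$) yields ${\cal F}_k = \sum_r {\cal F}_k^{(r)}$.
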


\begin{proof}
The dynamic Boltzmann distribution $\pt(n,\alb,\xb,t)$ is a maximum entropy (MaxEnt) distribution, where each interaction function $\nu_k(\alb_\anglink,\xb_\anglink,t)$ controls a corresponding moment $\mu_k(\alb_\anglink,\xb_\anglink,t) \}$, given by:
\begin{widetext}
\begin{equation}
\mu_k(\alb_\anglink, \xb_\anglink,t)
= \sum_{n^\prime=0}^\infty \sum_{\alb^\prime} \int d\xb^\prime \; p(n^\prime,\alb^\prime,\xb^\prime,t)
\sum_\angljnpk \delta(\xb_\anglink - \xb_\angljnpk^\prime) \delta(\alb_\anglink - \alb_\angljnpk^\prime) .
\label{eq:moment}
\end{equation}
\end{widetext}
Here, $\delta(\xb)$ denotes a multi-dimensional Dirac delta function. 
Note that there are $L = \sum_{k=1}^K \binom{n}{k}$ interaction terms and equally many moments they control. 
Switching to vector notation, let $\vnu$ of length $L$ denote the interaction functions, and $\vmu$ the corresponding moments.

Relating the interaction functions to the moments constitutes an inverse Ising problem. 
Let the solution to this problem be
\begin{equation}
\vnu_l = \phi_l ( \vmu )
\label{eq:invIsingCont}
\end{equation}
for some functions $\phi_l$ for $l=1,\dots,L$. 
This solution depends only on the interaction functions, and not on the reaction operators appearing in the CME. 
For a single reaction process, let the differential equations for the moments be $\dot{\vmu}^{(r)}$, resulting from $\dot{p}^{(r)} = \matr{W}^{(r)} p^{(r)}$, where $\dot{x}$ denotes a time derivative. 
Taking the time derivatives of both sides of~(\ref{eq:invIsingCont}) gives:
\begin{equation}
\dot{\vnu_l}^{(r)}
= 
\sum_{\lp=1}^L
\frac{ \partial \phi_l (\vmu) }{ \partial \vmu_\lp }
\dot{\vmu}_\lp^{(r)}
\end{equation}
For the full network of reactions then:
\begin{equation}
\dot{\vnu_l}
= 
\sum_{\lp=1}^L
\frac{ \partial \phi_l (\vmu) }{ \partial \vmu_\lp }
\dot{\vmu}_\lp
=
\sum_r
\sum_{\lp=1}^L
\frac{ \partial \phi_l (\vmu) }{ \partial \vmu_\lp }
\dot{\vmu}_\lp^{(r)}
=
\sum_r
\dot{\vnu}_l^{(r)}
\label{eq:nudot}
\end{equation}
gives the desired linearity property.

\end{proof}

Due to Proposition~1, the functionals ${\cal F}$ will be referred to as \textit{basis functionals}. In Section~\ref{sec:3.C}, the utility of this property is explored further in a machine learning context.


\subsection{\label{sec:2.C}Formulation of General Problem to Determine Functionals Governing Spatial Dynamics}

We next formulate a general problem to determine the functionals leading at all times to the MaxEnt Boltzmann distribution. 
Define the action as the KL-divergence between the true and reduced models (extending Ref.~\onlinecite{johnson_2015}):
\begin{equation}
\begin{split}
S &= \int_0^\infty dt \; \dkl(p || \pt) , \\
\dkl (p || \pt) &= \sum_{n=0}^\infty \sum_\alb \int d\xb \; p(n,\alb,\xb,t) \ln \frac{p(n,\alb,\xb,t)}{\pt (n,\alb,\xb,t)} .
\end{split}
\end{equation}
Next, we introduce notation to define a higher-order variational problem. 
Since the interaction functions are defined by specifying the set of functionals ${\cal F}_k [ \{ \nu (\alb,\xb,t)\} ]$ for all $k =1,\dots,K$, we use the notation $\nu_k \mbox{{\textlbrackdbl}} \{ {\cal F} \} \mbox{{\textrbrackdbl}}$ to denote that $\nu_k$ is a higher-order generalization of a functional. 
The action is a functional of the set of all interaction functions, which we denote by $S [ \{ \nu \mbox{{\textlbrackdbl}} \{ {\cal F} \} \mbox{{\textrbrackdbl}} \} ]$, where $\{ x \} = \{ x_k \}_{k=1}^K$. 

The higher-order variational problem for the basis functionals is given by the chain rule:
\begin{widetext}
\begin{equation}
\frac{
\deltaHI 
S [ \{ \nu \mbox{{\textlbrackdbl}} \{ {\cal F} \} \mbox{{\textrbrackdbl}} \} ]
}
{\deltaHI {\cal F}_k \left [ \{ \nu(\alb, \xb,t) \} \right ] }
=
\sum_{\kp=1}^K \sum_{\alb^\prime} \int d\xb^\prime \int d\tp \; \frac{\delta 
S [ \{ \nu \} ]
}{
\delta \nu_{\kp}(\alb^\prime,\xb^\prime,\tp)
} \frac{\deltaHI \nu_{\kp}(\alb^\prime,\xb^\prime,\tp)}{\deltaHI {\cal F}_k \left [ \{ \nu(\alb, \xb,t) \} \right ] }
=
0
,
\label{eq:varProblemGeneralChain}
\end{equation}
\end{widetext}
where we use the notation $\deltaHI$ to denote that this is not an ordinary variational problem, in the sense that a variation with respect to a functional is implied. 
Equation~(\ref{eq:varProblemGeneralChain}) should therefore be regarded as a purely notation solution, generalizing the well-known chain rule for functionals where a variational derivative is taken of a functional of a functional: $\frac{\delta F[G[\phi]]}{\delta \phi(y)} = \int dx \; \frac{\delta F[G]}{\delta G(x)} \frac{\delta G[\phi]}{\delta \phi(y)}$. 
The first term is a variational derivative analogous to that appearing in the derivation of the BM learning algorithm~\cite{ackley_1985}, giving:
\begin{widetext}
\begin{equation}
\frac{
\deltaHI 
S [ \{ \nu \mbox{{\textlbrackdbl}} \{ {\cal F} \} \mbox{{\textrbrackdbl}} \} ]
}
{\deltaHI {\cal F}_k \left [ \{ \nu(\alb, \xb,t) \} \right ] }
=
\sum_{\kp=1}^K \sum_{\alb^\prime} \int d\xb^\prime \int_0^\infty d\tp \; 
\left ( 
\mu_\kp(\alb^\prime, \xb^\prime,\tp)
- \tilde{\mu}_\kp(\alb^\prime, \xb^\prime,\tp)
\right ) 
\frac{\deltaHI \nu_{\kp}(\alb^\prime,\xb^\prime,\tp)}{\deltaHI {\cal F}_k \left [ \{ \nu(\alb, \xb,t) \} \right ] } 
= 0 ,
\label{eq:varProblemGeneral}
\end{equation}
\end{widetext}
where the moments $\mu$ are defined in~(\ref{eq:moment}), with $\tilde{\mu}$ having $p$ replaced by $\pt$. Next, we consider well-mixed systems where the de-escalation from functionals to ordinary functions $F_k$ makes this problem~(\ref{eq:varProblemGeneral}) well-defined. In Section~\ref{sec:2.F}, we parameterize the functional form of ${\cal F}$ to consider spatially distributed systems.


\subsection{\label{sec:2.D}Learning Algorithm for Reduced Dynamics of Well-Mixed Systems in One Species}

In the case of well-mixed systems in one species, the state of the system is entirely characterized by the number of individuals $\ket{n,t}$. Dropping the species and position labels in the dynamic Boltzmann distribution gives:
\begin{align}
\pt (n,t) = \frac{1}{\mathcal{Z} \left ( \{ \nu \} \right )} \exp [ - \sum_{k=1}^K \binom{n}{k} \nu_k (t) ] ,
\label{eq:boltzWM}
\end{align}
where we use the notation $\{ \nu \} = \{ \nu_{\kp} \}_{\kp=1}^K$. The time evolution is now described by basis functions forming the autonomous differential equation system:
\begin{equation}
\begin{split}
\frac{d}{dt} \nu_k (t) &= F_k \left ( \{ \nu \} \right ), \\
\text{with I.C.:}\quad \nu_k(t=0) &= \eta_k ,
\end{split}
\label{eq:adeWM}
\end{equation}
where $F_k$ are now functions rather than functionals ${\cal F}_k$. 
The variational problem~(\ref{eq:varProblemGeneral}) for the basis functions becomes:
\begin{widetext}
\begin{equation}
\frac{\delta S [ \{ \nu [ \{ F \} ] \} ] }{\delta F_k( \{ \nu \} )}
=
\sum_{\kp=1}^K \int_0^\infty d\tp \; \left ( \left \langle \binom{n}{\kp} \right \rangle_{p (\tp)} - \left \langle \binom{n}{\kp} \right \rangle_{\pt (\tp)} \right ) \frac{\delta \nu_{\kp}(\tp)}{\delta F_k( \{ \nu \} )} 
= 0 ,
\label{eq:varProblem}
\end{equation}
\end{widetext}
where $\left \langle X \right \rangle_p(\tp) = \sum_{n=0}^\infty X p(n,\tp)$ and similarly for $\pt$.

The variational term on the RHS of~(\ref{eq:varProblem}) may be determined by a number of methods, including by an ODE formulation derived in Appendix~\ref{app:varTerm:WM}, a PDE formulation derived from applying the chain rule at the initial condition, and using a Lie series approach (Supplemental Material).
The first of these and arguably the most practical is:
\begin{equation}
\begin{split}
\frac{d}{d\tp} \left ( \frac{\delta \nu_\kp(\tp)}{\delta \fknu} \right )
=&
\sum_{l=1}^K
\frac{\partial F_\kp(\{ \nu(\tp) \})}{\partial \nu_l(\tp)}
\frac{\delta \nu_l(\tp)}{\delta F_k (\{ \nu \})} 
+ \delta_{\kp,k} \delta ( \{ \nu \} - \{ \nu(\tp) \} ) , \\
\text{with I.C.:}\quad \frac{\delta \nu_\kp(\tp=0)}{\delta \fknu } =& 0 .
\end{split}
\label{eq:pdeWM}
\end{equation}
An algorithmic solution to~(\ref{eq:varProblem}) is therefore possible in the form of a PDE-constrained optimization problem: 
Solve~(\ref{eq:varProblem},\ref{eq:pdeWM}) subject to the PDE-constraint~(\ref{eq:adeWM}). 
An example algorithm using simple gradient descent is given by Algorithm~\ref{alg:1}.

\begin{figure}[t]
\begin{algorithm}[H]
\caption{Gradient Descent for Learning Basis Functions Governing Well-Mixed Dynamics} \label{alg:1}
\begin{algorithmic}[1]
\algsetblock[Init]{Initialize}{Stop}{4}{1.4em}
\Initialize
\State Grid of values $\{ \nu \}$ to solve over.
\State $\fknu$ for $k=1,\dots,K$.
\State Max. integration time $T$.
\State A formula for the learning rate $\lambda$.
\While{not converged}
	\State Initialize $\Delta F_k(\{\nu\}) = 0$ for all $k,\{\nu\}$. \;
    \State Generate a sample of random initial conditions $\{ \eta \}$. \;
    \For{$\eta_i \in \{ \eta \}$}
		\LineComment{\textit{Generate trajectory in reduced space $\{ \nu \}$:}}
		\State Solve the PDE constraint~(\ref{eq:adeWM}) with IC $\eta_i$ for $0 \leq t \leq T$. \;
		\State Solve~(\ref{eq:pdeWM}) for variational term $\delta \nu_\kp(t)/\delta \fknu$. \;
		\LineComment{\textit{Sampling step:}}
		\State Evaluate moments $\left \langle \binom{n}{\kp} \right \rangle_{\pt (\tp)}$ of the Boltzmann distribution by sampling or analytically. \;
		\State Evaluate true moments $\left \langle \binom{n}{\kp} \right \rangle_{p (\tp)}$ by stochastic simulation or analytic solution. \;
        \LineComment{\textit{Evaluate the objective function:}}
        \State Update $\Delta F_k(\{\nu\})$ as the cumulative moving average of~(\ref{eq:varProblem}) over initial conditions $\{ \eta \}$. \;
	\EndFor
    \LineComment{\textit{Update to decrease objective function:}}
	\State Update $\fknu$ to decrease the objective function: $\fknu \rightarrow \fknu - \lambda \Delta F_k(\{\nu\} )$. \;
\EndWhile
\end{algorithmic}
\end{algorithm}
\end{figure}

We note the implicit connection between this approach and using Boltzmann machines, such as in GCCD, by the algorithm's objective function. 
Here, the whole trajectory of moments from stochastic simulations is used to directly estimate time evolution operators, rather than estimating the interaction parameters at each time step.
We make this connection explicit in Algorithm~2 in Section~\ref{sec:3.C} below.

Further improvements to Algorithm~1 are possible, such as to replace ordinary gradient descent by an accelerated version, e.g. Nesterov accelerated gradient descent~\cite{nesterov_1983}. 
Furthermore, the wealth of methods available to solve PDE-constrained optimization problems, e.g. adjoint methods~\cite{giles_2000}, offer rich possibilities for further development.


\subsubsection{\label{sec:2.D.1}Example: Mean of the Galton-Watson Branching Process}
 
As a simple illustrative example, consider a reduced model that captures the time-evolving mean of the Galton-Watson branching process, consisting of the birth process $A \rightarrow A+A$ with rate $k_b$ and death $A \rightarrow \oset$ with rate $k_d$.

In this case, there are only self-interactions ($K=1$) described by $\nu(t)$ with basis function $F(\nu(t))$. 
The dynamic Boltzmann distribution is:
\begin{equation}
\pt(n,t) = \frac{1}{\mathcal{Z}} \exp[-n \nu(t)] . 
\end{equation}
Using the fact that
\begin{equation}
\langle n \rangle_{\pt} = \frac{1}{e^{\nu} - 1}
\end{equation}
and from the CME
\begin{equation}
\frac{d \langle n \rangle_p}{dt} = (k_b-k_d) \langle n \rangle_p
\end{equation}
gives the analytic solution for the basis functions
\begin{equation}
F(\nu) = (k_b-k_d) ( e^{-\nu} - 1 ) . 
\end{equation}

This solution is reproduced using Algorithm~1, as shown in Figure~\ref{fig:ex1:sol} for $k_d = 3 k_b / 2$. 
Here, the solution is constructed on a grid of $\nu \in [0^+,3.0]$ with spacing $\Delta \nu = 0.1$, with maximum integration time $T=1$ (arbitrary units). 
The learning rate is decreased exponentially over iterations to improve convergence. 
The convergence of the algorithm is shown in Figure~\ref{fig:ex1:sanddsdf}.

\begin{figure}[!ht]
	\centering
	\includegraphics[width=0.5\columnwidth]{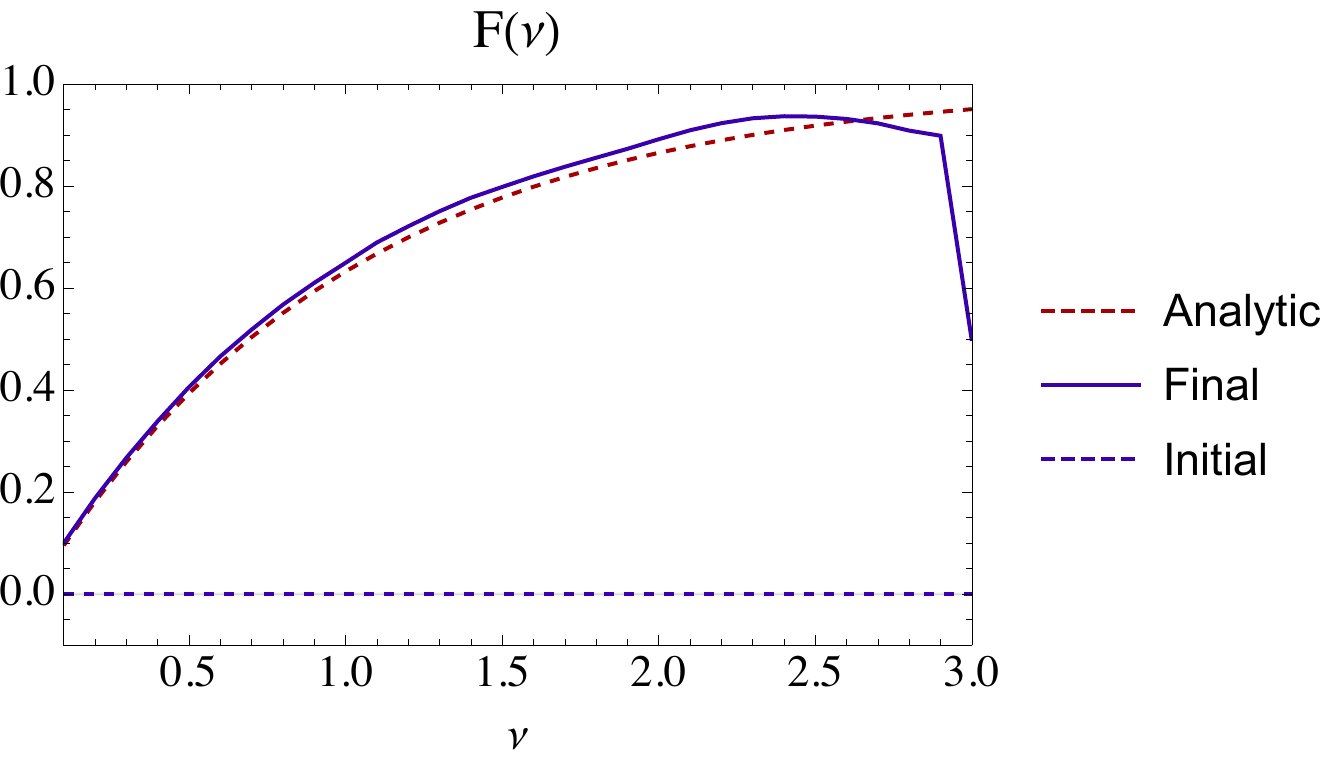} 
	\caption{Learned basis function for the simple annihilation process $A\rightarrow\oset$ after 40 iterations, from a uniform initial condition.}
	\label{fig:ex1:sol}
\end{figure}


\begin{figure}[!ht]
	\centering
	\includegraphics[width=0.7\columnwidth]{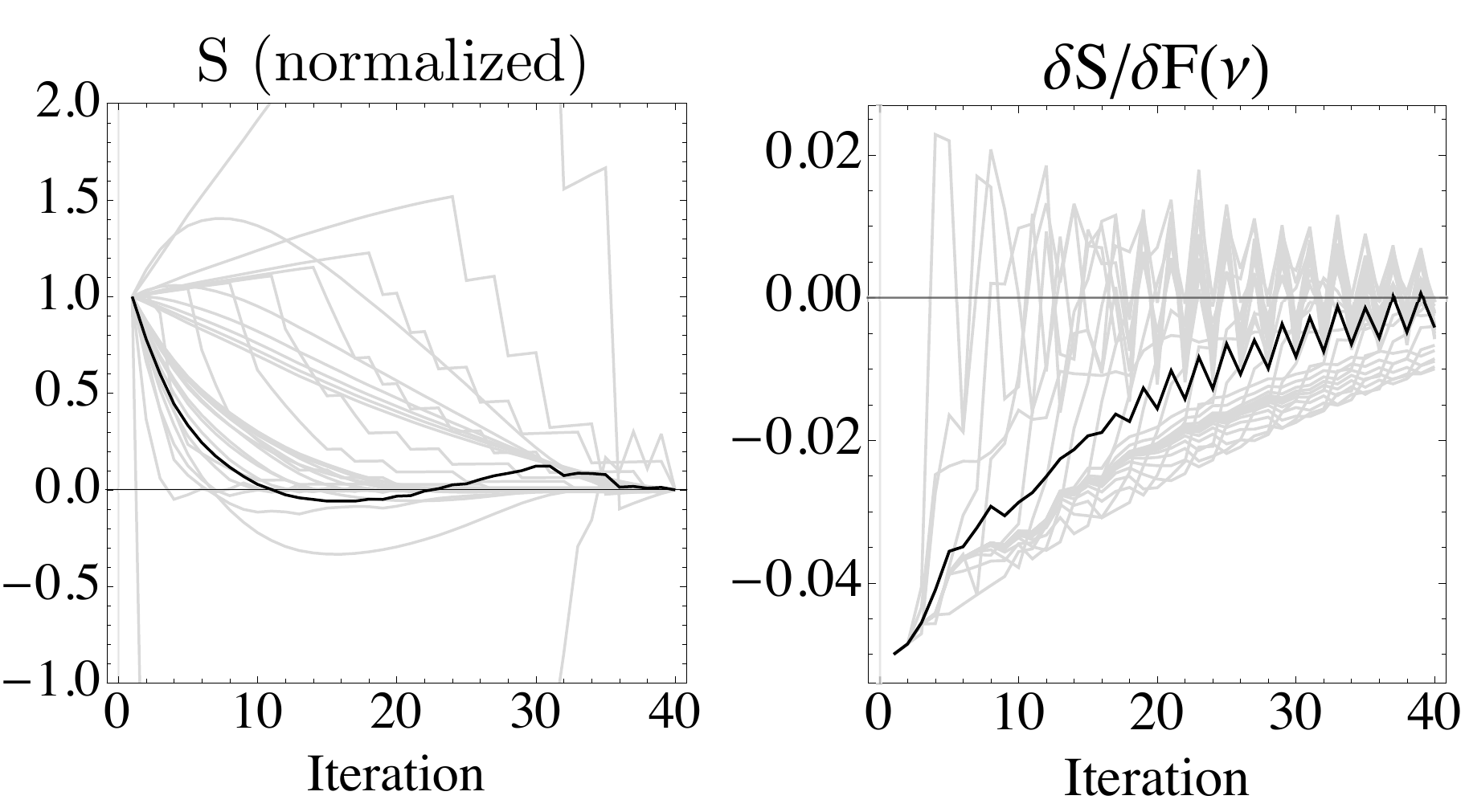}
	\caption{\textit{Left:} The convergence of the action $S$ as it is minimized over iterations following Algorithm~1. Trajectories (grey) for individual $\eta$, normalized to start at one and end at zero, and their mean (black). \textit{Right:} The minimization of the variation in the action $\delta S/\delta F(\nu)$, as a function of the position to vary $\nu$. Trajectories (grey) at each $\nu$, and their mean (black).}
	\label{fig:ex1:sanddsdf}
\end{figure}

\begin{figure}[!ht]
	\centering
	\includegraphics[width=0.5\columnwidth]{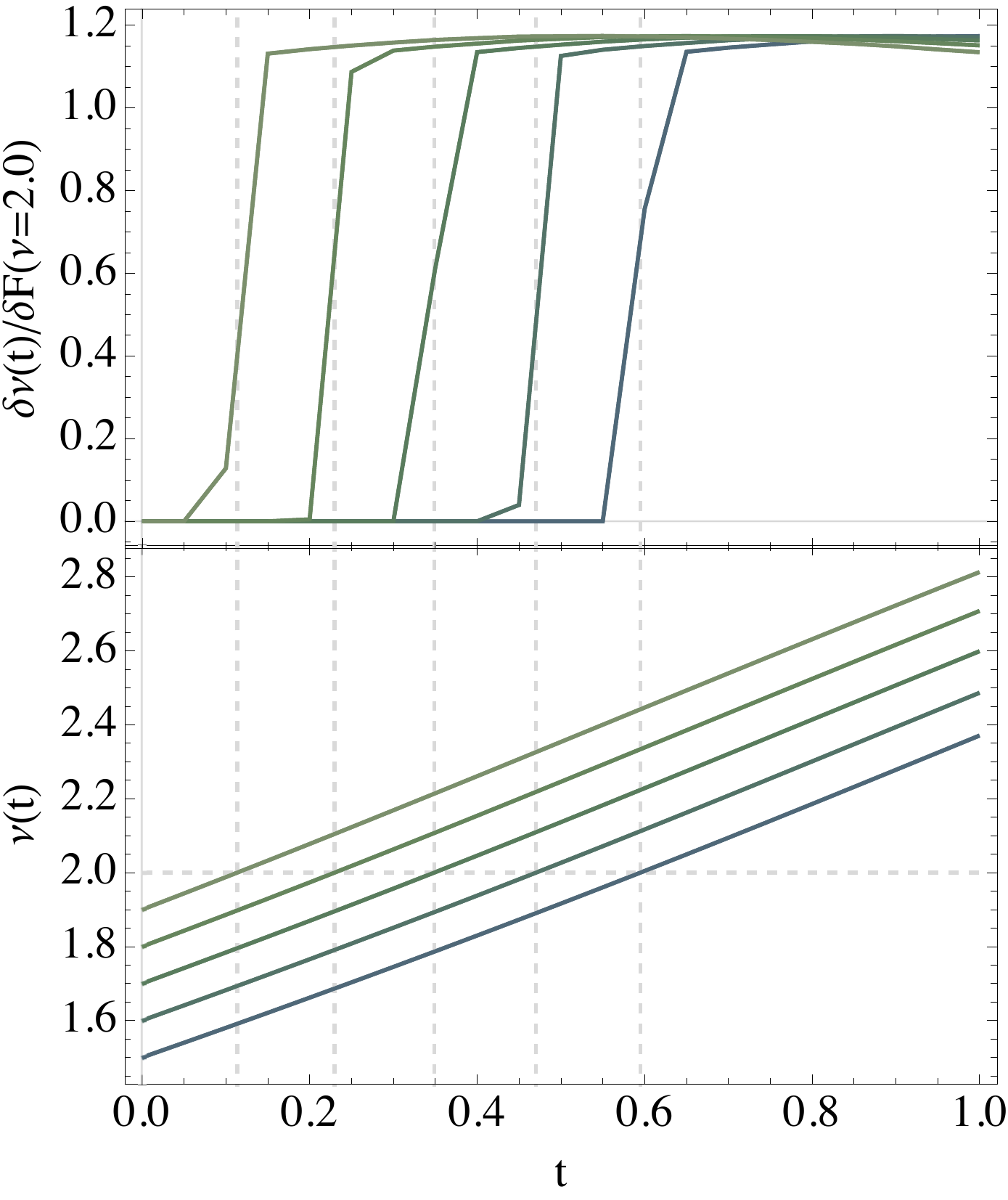}
	\caption{\textit{Top:} The variational term $\delta \nu(t)/\delta F(\nu=2.0)$ for several initial conditions $\eta=1.5,\dots, 1.9$ as a function of time, obtained by solving~(\ref{eq:pdeWM}) numerically. \textit{Bottom:} The solution trajectories $\nu(t)$ starting from these $\eta$. Only when the solution trajectory is close to $\nu(t)=2.0$ and thereafter does varying the basis function $F$ at this point have a non-zero effect.}
	\label{fig:ex1:r}
\end{figure}


\subsubsection{\label{sec:2.D.2}Example: Two Basis Functions Controlling Mean and Variance}

Consider again the process of the previous section, but with $K=2$ basis functions $\nu_1(t)$ and $\nu_2(t)$ controlling the mean and variance in the number of particles. 
The dynamic Boltzmann distribution is:
\begin{equation}
\pt(n,t) = \frac{1}{\mathcal{Z}} \exp[-n \nu_1(t) - \binom{n}{2} \nu_2(t)] .
\end{equation}

This may be interpreted as a Gau{\ss}ian distribution in the number of particles, provided we treat $n$ as continuous and extend its range to $\pm \infty$, or consider systems with means far from $n=0$. 
In this case, the mean $\mu$ and variance $\sigma^2$ can be related to the interaction functions as $\mu = 1/2 - \nu_1/\nu_2$ and $\sigma^2 = 1 / \nu_2$. 
The differential equations derived from the CME for the moments of this system are:
\begin{equation}
\begin{split}
\frac{d \mu}{dt} &= (k_b-k_d) \mu , \\
\frac{d \sigma^2}{dt} &= 2 (k_b-k_d) \sigma^2 + (k_b+k_d) \mu ,
\end{split}
\end{equation}
which can be converted to analytic solutions for the basis functions:
\begin{equation}
\begin{split}
F_1(\nu_1,\nu_2) = &\nu_1 \left( k_d - k_b + (k_b + k_d) \nu_1 \right ) - \frac{\nu_2}{2} \left ( k_b - k_d + (k_b + k_d) \nu_1 \right ) ,
\\
F_2(\nu_1,\nu_2) = &- \frac{\nu_2}{2} \Big ( k_d \left ( - 4 - 2 \nu_1 + \nu_2 \right ) + k_b \left ( 4 - 2 \nu_1 + \nu_2 \right ) \Big ) .
\end{split}
\label{eq:gw}
\end{equation}
These are shown in Figure~\ref{fig:gw_2d_bfs}. 

Figure~\ref{fig:gw_2d_var_term} shows the variational terms $\delta \nu_1(t)/\delta F_1(\nu_1,\nu_2)$ and $\delta \nu_2(t)/\delta F_1(\nu_1,\nu_2)$, resulting from Algorithm~1 and determined by~(\ref{eq:pdeWM}). 
Interestingly, the self-varying term $\delta \nu_1(t)/\delta F_1(\nu_1,\nu_2)$ more closely resembles the multivariate delta-function appearing in~(\ref{eq:pdeWM}), while the cross term $\delta \nu_2(t)/\delta F_1(\nu_1,\nu_2)$ shows a greater temporal memory of the solution trajectory. 


\begin{figure}[!ht]
	\centering
	\includegraphics[width=0.7\columnwidth]{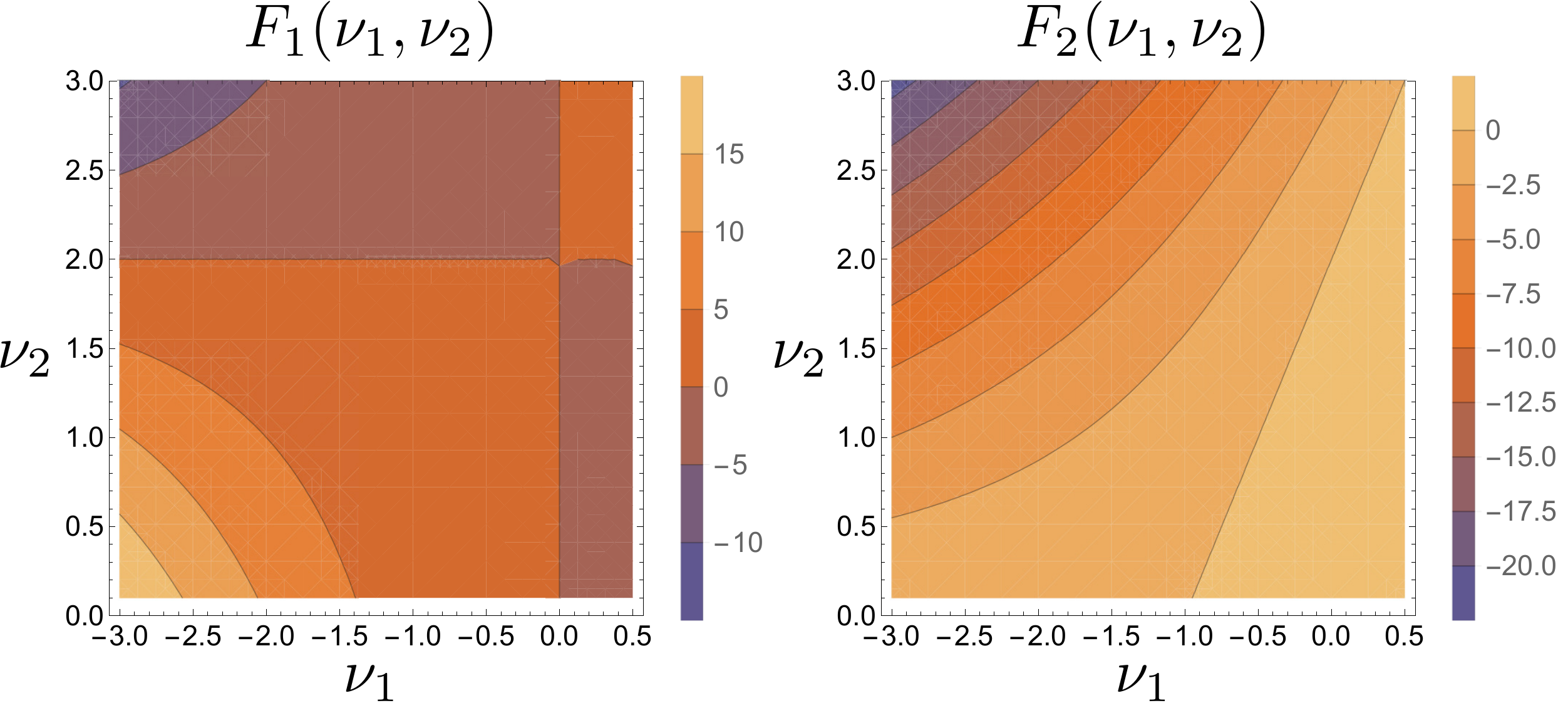}
	\caption{The true basis functions~(\ref{eq:gw}) for the Galton-Watson system. \textit{Left:} $F_1(\nu_1,\nu_2)$. \textit{Right:} $F_2(\nu_1,\nu_2)$. The reaction rates used are $k_d = 3 k_b / 2 = 1.5$.}
	\label{fig:gw_2d_bfs}
\end{figure}


\begin{figure}[!ht]
	\centering
	\includegraphics[width=0.7\columnwidth]{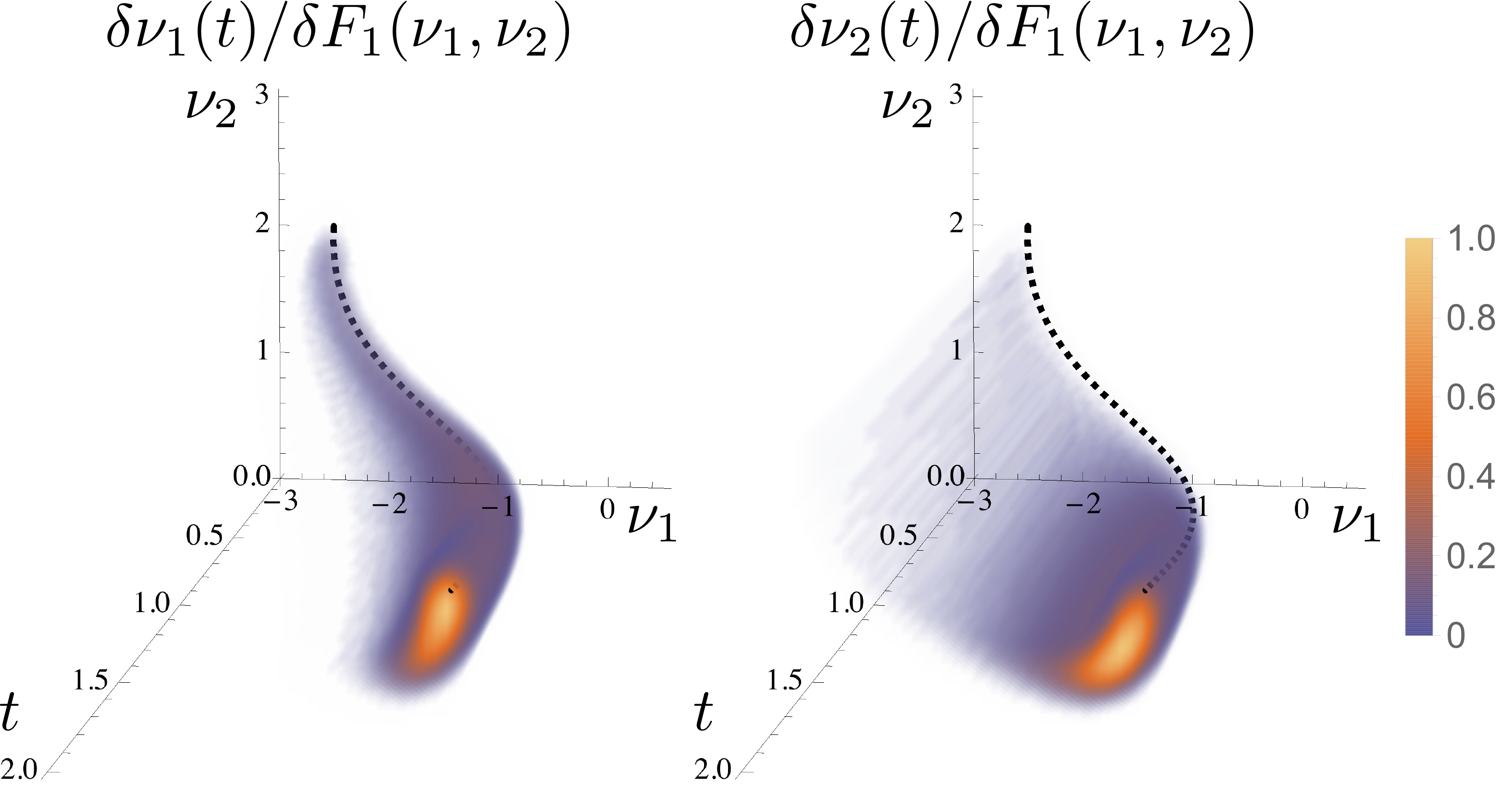}
	\caption{Two of the four variational terms for the Galton-Watson system using two basis functions. \textit{Left:} $\delta \nu_1(t)/\delta F_1(\nu_1,\nu_2)$. \textit{Right:} $\delta \nu_2(t)/\delta F_1(\nu_1,\nu_2)$. The black dashed line shows the solution trajectory. The initial conditions are $(\eta_1,\eta_2)=(-2.5,2.0)$, with reaction rates $k_d = 3 k_b / 2 = 1.5$. The effect of the mixing term $\delta \nu_2(t)/\delta F_1(\nu_1,\nu_2)$ is lower in magnitude but persists longer over time. Note that the absolute magnitude of the spread is related to the approximation chosen for the delta function in~(\ref{eq:pdeWM}), a normalized multivariate Gau{\ss}ian with variance of 0.1 in both directions $\nu_1,\nu_2$.}
    \label{fig:gw_2d_var_term}
\end{figure}


\subsection{\label{sec:2.E}Analytic MaxEnt Solutions}

We next consider special cases where analytic solutions for the basis functionals are possible, to motivate a parameterization leading to a solvable
version of the variational problem~(\ref{eq:varProblemGeneral}).


\subsubsection{\label{sec:2.E.1}Gau{\ss}ian Distributions}

The well-mixed case~(\ref{eq:boltzWM}) is the MaxEnt distribution consistent with $\left \langle \binom{n}{k} \right \rangle_p$ for $k=1,\dots,K$. 
If $K=2$, then~(\ref{eq:boltzWM}) may be interpreted as a Gau{\ss}ian distribution in continuous $n$, as discussed in the previous section. 
Generalizing these results, the basis functions are generally given by:
\begin{equation}
\begin{split}
F_1(\nu_1,\nu_2) &= - \nu_2 \frac{d \mu}{dt} - \nu_1 \nu_2 \frac{d \sigma^2}{dt} , \\
F_2(\nu_1,\nu_2) &= - \nu_2^2 \frac{d \sigma^2}{dt} ,
\end{split}
\end{equation}
where $d \mu/dt$, $d\sigma^2/dt$ are evaluated from the CME and expressed in terms of $\nu_1,\nu_2$. 
Here, a moment closure approximation must be applied if the reactions are greater than unimolecular in number of reagents. 
For example, the higher order moments appearing in the CME may be approximated by those of the reduced model $\pt$ and expressed in terms of lower order $\mu,\sigma^2$ following the well known property of Gau{\ss}ian distributions. 
This closure technique is described further in Section~\ref{sec:3.A}.


\subsubsection{\label{sec:2.E.2}Diffusion from Point Source}

In the spatial case, consider a diffusion process of a fixed number of particles $n$ with diffusion constant $D$ spreading out from a point source at $x_0$. 
The analytic solution to the CME is:
\begin{equation}
p(\xb,t) = \left ( 4\pi Dt \right )^{-n/2} \exp [ - \sum_{i=1}^{n} \frac{(x_i-x_0)^2}{4Dt} ] ,
\label{eq:diff}
\end{equation}
reflecting that only self interactions ($K=1$) are necessary to describe the process. 
The reduced model~(\ref{eq:boltz}) becomes:
\begin{equation}
\pt(\xb,t) = \frac{1}{\mathcal{Z}} \exp [ - \sum_{i=1}^{n} \ovnu(x_i,t) ] .
\label{eq:boltzDiff}
\end{equation}
It is straightforward to verify that $p(\xb,t) = \pt(\xb,t)$ if
\begin{equation}
\ovnu(y,t) = \ln \left ( 1 + \frac{1}{n} \right ) + \frac{1}{2} \ln \left (4 \pi D t \right ) + \frac{(y-x_0)^2}{4Dt} .
\label{eq:diffInvIsing}
\end{equation}
Consequentially, from $\partial_t \ovnu(y,t)$, the basis functional is:
\begin{equation}
F [ \ovnu(y,t) ] = D \partial_y^2 \ovnu(y,t) - D \left ( \partial_y \ovnu(y,t) \right )^2 .
\label{eq:bfdiff}
\end{equation}


\subsubsection{\label{sec:2.E.3}Unimolecular Reaction-Diffusion}

For reaction networks that involve only diffusion and unimolecular reactions, two key properties hold for the CME solution:
\begin{enumerate}
\item Separable spatial and particle number distributions $p(n,\xb) = p(n) p(\xb)$ where each distribution is normalized $\sum_n p(n) = 1$ and $\int d\xb \; p(\xb) = 1$.
\item Independence of spatial distribution $p(\xb) = p(x_1) p(x_2) \dots p(x_n)$ where each $\int dx \; p(x) = 1$ is normalized. 
This assumes that initial $p(x_i)$ are independent - otherwise, a fixed mixture of independent components must be considered. 
\end{enumerate}

Analogous to the purely diffusive process above, this allows analytic solutions to the inverse Ising problem by imposing these conditions upon the dynamic Boltzmann distribution $\pt$. 
Here, we exploit the fact that multiplication of Boltzmann distributions results in addition of the energy functions.

Introduce a single interaction function $\ovnu(x,t)$ to capture the diffusion process and the usual $\nu_1(t), \dots, \nu_K(t)$ to describe the reactions (for brevity, omit further time arguments in this section). 
Furthermore, impose the normalization $\int dx \exp [-\ovnu(x)] = 1$. 
The dynamic Boltzmann distribution becomes:
\begin{equation}
\begin{split}
\pt(n,\xb) &= \pt(n) \pt(\xb) = \pt(n) \pt(x_1) \pt(x_2) \dots \pt(x_n) , \\
\pt(n) &= \frac{1}{\mathcal{Z}} \exp [ - \sum_{k=1}^K \binom{n}{k} \nu_k ] , \\
\pt(x) &= \exp [ -\ovnu(x) ] ,
\end{split}
\end{equation}
where the partition function is
\begin{equation}
\mathcal{Z} = \sum_n \int d\xb \; \pt(n,\xb) = \sum_n \pt(n) .
\end{equation}

The distribution $\pt(n,\xb)$ is the MaxEnt distribution consistent with the moments $\langle \binom{n}{k} \rangle$ for all $k=1,\dots,K$, as well as the spatial moment:
\begin{equation}
\left \langle \sum_{i=1}^n \delta(y-x_i) \right \rangle = \sum_n \int d\xb \; \sum_{i=1}^n \delta(y-x_i) \pt(n,\xb) = \exp [-\ovnu(y) ] \left \langle n \right \rangle ,
\end{equation}
such that the solution to the inverse Ising problem is:
\begin{equation}
\ovnu(y) = \ln \left ( 
\frac{ \left \langle n \right \rangle }{ \left \langle \sum_{i=1}^n \delta(y-x_i) \right \rangle }
\right ) .
\end{equation}

The solution for the inverse Ising problem for $\langle \binom{n}{k} \rangle$ is independent of this spatial moment, and analytically possible e.g. for $K=1$ or $2$, as demonstrated in Sections~\ref{sec:2.D.1} and \ref{sec:2.E.1} above.

Taking the time derivatives of these solutions $\dot{\ovnu}$ and $\dot{\nu_k}$ and using the CME to derive differential equations for the moments gives the basis functionals. 
For unimolecular reactions, the diffusion process does not affect the reactions, such that the functional controlling $\ovnu$ is always that of diffusion~(\ref{eq:bfdiff}). 
For example, for a branching random walk consisting of diffusion from a point source and the Galton-Watson process with $K=2$, the basis are the functional~(\ref{eq:bfdiff}) and the functions~(\ref{eq:gw}).


\subsection{\label{sec:2.F}Parameterizations for Spatially Heterogeneous Systems}

For reaction-diffusion systems that involve reactions greater than unimolecular in number of reagents, it generally becomes difficult to analytically solve the inverse Ising problem and consequentially identify basis functionals.
However, an algorithmic solution remains possible, where we guess a local parameterization of the functional~(\ref{eq:ade}) based on the analytic solutions presented above.

Let $\beb,\yb$ be of length $k$, and use the notation
\begin{equation}
\{ \nu(\beb, \yb,t) \} = \left \{ \nu_{\kp} (\beb_\anglikkp, \yb_\anglikkp, t) \; \forall \; \anglikkp : 1 \leq \kp \leq k \right \} .
\end{equation}
Then choose the {\it spatially local} parameterization of ${\cal F}_k$ in (\ref{eq:ade}): 

\begin{widetext}
\begin{equation}
\begin{split}
\frac{d}{dt} \nu_k (\beb, \yb, t) 
=&
F_k^{(0)} (\{ \nu(\beb,\yb,t) \}) + \sum_{\lambda=1}^k \Bigg (
F_k^{(1,\lambda)} (\{ \nu(\beb,\yb,t) \}) \sum_\anglikl \sum_{m=1}^\lambda 
\left ( \partial_m \nu_\lambda(\beb_\anglikl,\yb_\anglikl,t) \right )^2 
\\ & \hspace{40mm}
+ 
F_k^{(2,\lambda)} (\{ \nu(\beb,\yb,t) \}) \sum_\anglikl \sum_{m=1}^\lambda 
\partial_m^2 \nu_\lambda(\beb_\anglikl,\yb_\anglikl,t) 
\Bigg ) ,
\\
\text{with I.C.:}\quad \nu_k(\beb, \yb,t=0) =& \eta_k(\beb, \yb) ,
\end{split}
\label{eq:adeLocal}
\end{equation}
\end{widetext}
where $\partial_m$ denotes the derivative with respect to the $m$-th component of $\yb_\anglikl$, and $F_k^{(\gamma)} (\{ \nu(\beb, \yb,t) \})$ for $(\gamma)=(0),(1,\lambda),(2,\lambda)$ are local functions, i.e. functions of the arguments on the left hand side of~(\ref{eq:adeLocal}).

The variational problem~(\ref{eq:varProblemGeneral}) now becomes
\begin{widetext}
\begin{equation}
\frac{\delta S[\{ \nu [\{ F \} ] \}] }{\delta F_k^{(\gamma)} (\{ \nu(\beb,\yb) \}) } 
=
\sum_{\kp=1}^K \sum_{\beb^\prime} \int d\yb^\prime \; \int d \tp \; 
\left ( 
\mu_\kp(\beb^\prime,\yb^\prime,\tp)
-
\tilde{\mu}_\kp(\beb^\prime,\yb^\prime,\tp)
\right ) 
\frac{
\delta \nu_\kp (\beb^\prime, \yb^\prime, \tp)
}{
\delta F_k^{(\gamma)} (\{ \nu(\beb,\yb) \}) }
=
0
\label{eq:optSpatial}
\end{equation}
\end{widetext}
for $(\gamma)=(0),(1,\lambda),(2,\lambda)$, where  $\beb^\prime,\yb^\prime$ are of length $\kp$.

Analogously to the well-mixed case, it is possible to derive a PDE system governing the variational term $\delta \nu_\kp (\beb^\prime, \yb^\prime, \tp) / \delta F_k^{(\gamma)} (\{ \nu(\beb,\yb) \})$. 
In Appendix~\ref{app:varTerm:diff}, an illustrative example is derived for a diffusion process.

Equations~(\ref{eq:adeLocal},\ref{eq:optSpatial}) together form a PDE-constrained optimization problem, which may be solved analogously to Algorithm~1, with additional spatial axes.


%

\subsubsection{\label{sec:2.F.1}Example: Branching Random Walk}

Consider a branching random walk consisting of the Galton-Watson process and diffusion from a point source with rate $D$ in one spatial dimension and one species. 
From the true solutions for the basis functionals~(\ref{eq:bfdiff},\ref{eq:gw}), use one spatial interaction function $\ovnu(y,t)$ and two purely temporal $\nu_1(t),\nu_2(t)$, and further restrict the parameterization~(\ref{eq:adeLocal}) of the basis functionals to be
\begin{align}
\begin{split}
\frac{d \ovnu(y,t)}{dt} =& \ovf[\ovnu(y,t)] =
\ovf^{(1)} (\ovnu(y,t)) \left ( \partial_y \ovnu(y,t) \right )^2 
+ \ovf^{(2)} (\ovnu(y,t)) \partial_y^2 \ovnu(y,t) ,
\end{split}
\label{eq:optc1}
\\
\frac{d \nu_k(t)}{dt} =& F_k[\nu_1(t),\nu_2(t)] = F_k^{(0)} (\nu_1(t),\nu_2(t))
\label{eq:optc2}
\end{align}
for $k=1,2$. 
The variational problem is
\begin{widetext}
\begin{align}
\frac{\delta S}{\delta \ovf^{(\gamma)} (\ovnu)}
=&
\int d\yp \int d\tp \; 
\left ( 
\mu_1(\yp,\tp) 
- \tilde{\mu}_1(\yp,\tp)
\right )
\frac{
\delta \ovnu (\yp, \tp)
}{
\delta \ovf^{(\gamma)} (\ovnu)
}
=
0 ,
\label{eq:opt1} \\
\frac{\delta S}{\delta F_k^{(0)} (\nu_1,\nu_2)}
=&
\sum_{\kp=1}^2
\int d\tp \; 
\left ( 
\left \langle \binom{n}{\kp} \right \rangle_{p(\tp)}
-
\left \langle \binom{n}{\kp} \right \rangle_{\pt(\tp)}
\right )
\frac{
\delta \nu_\kp (\tp)
}{
\delta F_k^{(0)} (\nu_1,\nu_2)
}
=
0
\label{eq:opt2}
\end{align}
\end{widetext}
for $\gamma=1,2$.

Differential equations governing the variational terms $\delta \ovnu(\yp,\tp) / \delta \ovf^{(\gamma)} (\ovnu)$ for $\gamma = 1,2$ are derived in Appendix~\ref{app:varTerm:diff}, given by~(\ref{eq:app:varTermsDiff}). 
Differential equations governing $\delta \nu_\kp(\tp) / \delta F_k^{(0)} (\nu_1,\nu_2)$ are given by~(\ref{eq:pdeWM}).

The optimization problem~(\ref{eq:opt1},\ref{eq:opt2}) subject to the PDE-constraints~(\ref{eq:optc1},\ref{eq:optc2}) may be solved algorithmically using Algorithm~1 in each $F_k,\ovf^{(1)},\ovf^{(2)}$, analogously to the well-mixed case. 
We note that the true solutions are given by~(\ref{eq:bfdiff},\ref{eq:gw}), in particular: $\ovf^{(1)} = D$ and $\ovf^{(2)} = -D$.

Figure~\ref{fig:exSpatial:r} plots the spatial variational terms resulting from the true basis functionals. 
Here, the reaction rates used are as before $k_d = 3 k_b/2 = 1.5$, with a diffusion constant of $D=1$. 
Contrary to the well-mixed case, these terms do not resemble step functions, but rather exhibit some extended temporal dynamics. 

\begin{figure}[!ht]
	\centering
	\includegraphics[width=0.6\columnwidth]{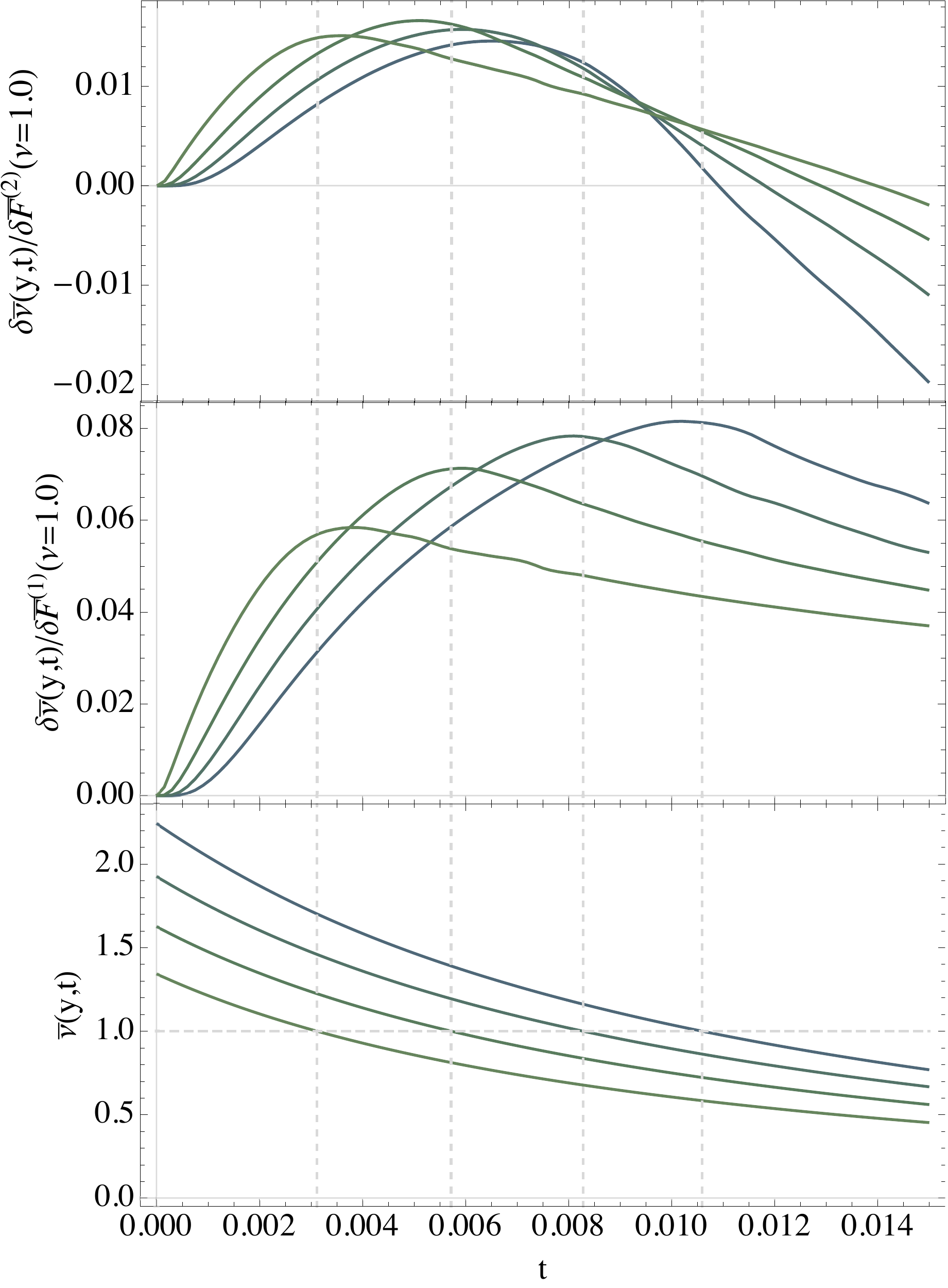}
	\caption{
    Branching random walk with diffusion in 1D estimated by Algorithm~1.
    \textit{Top:} The variational term $\delta \ovnu(y,t) / \delta \ovf^{(2)}(\nu=1.0)$ as a function of time at several spatial locations $y=0.25, 0.5, 0.75, 1$. Here, $\ovf^{(1)} = D,\ovf^{(2)}=-D$ are the true solutions. \textit{Middle:} $\delta \ovnu(y,t) / \delta \ovf^{(1)}(\nu=1.0)$. \textit{Bottom:} The solution trajectories $\ovnu(y,t)$, starting from a point source. Contrary to the well-mixed case, the variational terms do not resemble step functions at $\nu=1.0$, but rather exhibit some extended temporal dynamics.
    }
	\label{fig:exSpatial:r}
\end{figure}


\section{\label{sec:3}Estimating Effective Reduced Dynamics in 1D}

The PDE-constrained optimization problems above are the general solution for finding the basis functionals that govern the time evolution of the reduced MaxEnt model. 
Here, we present a more efficient machine learning approach for learning the basis functions from the solutions of simple, analytically solvable models. 
In Section~\ref{sec:3.A}, we present a method for finding such analytic solutions in the discrete lattice limit, and present examples for a variety of simple processes in Section~\ref{sec:3.B}. 
In Section~\ref{sec:3.C}, we demonstrate the utility of using such analytic solutions in a Boltzmann machine-like learning algorithm, and further in Section~\ref{sec:3.D} to learn non-linear combinations of solutions using artificial neural networks (ANNs).


\subsection{\label{sec:3.A}Mapping to Spin Glass Systems in 1D}

At low particle densities, a feasible model of a reaction-diffusion system in one spatial dimension and one species is that of a 1D lattice in the single occupancy limit. 
Let the spin values occupying each lattice site be $s_i \in \{0,1\}$, for all $i = 1, \dots, N$, denoting the absence or presence of a particle.

The reduced model~(\ref{eq:boltz}) now becomes the discrete analogue. 
We note that this model is consistent with the continuous version in some parameter regime where the separation between molecules is large compared to the interaction radius. 
By including only self-interactions described by an interaction function $h(t)$, and two particle nearest-neighbor interactions $J(t)$, we obtain the well known Ising model, with partition function:
\begin{equation}
\mathcal{Z} = \sum_{\{ s \}} \exp[h(t) \sum_{i=1}^N s_i + J(t) \sum_{i=1}^{N-1} s_i s_{i+1}] .
\label{eq:z}
\end{equation}

This may be evaluated explicitly using the standard transfer matrix method. 
In the thermodynamic limit, $\ln \mathcal{Z} \approx \lambda_+^N$ is analytically accessible, where $\lambda_+$ is the largest eigenvalue of the transfer matrix.

The inverse Ising problem has the solution:
\begin{equation}
\begin{pmatrix}
\left \langle \sum_{i=1}^N s_i \right \rangle (t) \\
\left \langle \sum_{i=1}^{N-1} s_i s_{i+1} \right \rangle (t)
\end{pmatrix}
=
\begin{pmatrix}
\partial_h \ln \mathcal{Z} \\
\partial_J \ln \mathcal{Z}
\end{pmatrix} .
\label{eq:invIsing}
\end{equation}
Taking the derivatives of both sides of~(\ref{eq:invIsing})
\begin{equation}
\begin{pmatrix}
\frac{d}{dt} \left \langle \sum_{i=1}^N s_i \right \rangle \\
\frac{d}{dt} \left \langle \sum_{i=1}^{N-1} s_i s_{i+1} \right \rangle
\end{pmatrix}
= 
\begin{pmatrix}
\partial_h^2 \ln \mathcal{Z} & \partial_h \partial_J \ln \mathcal{Z} \\
\partial_h \partial_J \ln \mathcal{Z} & \partial_J^2 \ln \mathcal{Z}
\end{pmatrix}
\begin{pmatrix}
\frac{dh}{dt} \\
\frac{dJ}{dt}
\end{pmatrix} .
\label{eq:nonInverted}
\end{equation}

The time derivatives of the moments on the left may be obtained directly from the CME $\dot{p} = \matr{W} p$ using the Doi-Peliti formalism described in Section~\ref{sec:2.A}. 
If the system is \textit{linear}, these may be expressed further in terms of $h,J$ using~(\ref{eq:invIsing}), and the basis functions are given directly by inverting~(\ref{eq:nonInverted}). 
If the system is \textit{non-linear}, the presence of a moment hierarchy requires an approximation in the form of a moment closure technique. 
Here, we choose to express the higher order moments that appear through the CME in terms of $h,J$, which is possible for any higher order moment since the partition function~(\ref{eq:z}) is analytically accessible. 
As a result of inverting~(\ref{eq:nonInverted}):
\begin{equation}
\begin{split}
\begin{pmatrix}
\tilde{F}_h (h,J) \\
\tilde{F}_J (h,J) 
\end{pmatrix}
& =
\begin{pmatrix}
\partial_h^2 \ln \mathcal{Z} & \partial_h \partial_J \ln \mathcal{Z} \\
\partial_h \partial_J \ln \mathcal{Z} & \partial_J^2 \ln \mathcal{Z}
\end{pmatrix}^{-1}
\times 
\begin{pmatrix}
\frac{d}{dt} \left \langle \sum_{i=1}^N s_i \right \rangle \\
\frac{d}{dt} \left \langle \sum_{i=1}^{N-1} s_i s_{i+1} \right \rangle
\end{pmatrix} ,
\label{eq:analyticBasis}
\end{split}
\end{equation}
where the RHS has been expressed in terms of $h,J$ as described above, and we use the notation $\tilde{F}_h,\tilde{F}_J$ to indicate that these are generally only approximations to the true basis functions $F_h,F_J$, and only exact for systems with closed moments.
Effectively, we have replaced the probability distribution $p$ in the CME $\dot{p}=W p$ by the dynamic Boltzmann distribution $\pt$, and evaluated the effect of the operator on the RHS on this new distribution. 
The analytic solution to the 1D inverse Ising problem therefore provides an elegant approach to moment closure (see Ref.~\onlinecite{johnson_2015},\onlinecite{smadbeck_2013} for related MaxEnt approaches to moment closure).
Similar extensions to 2D Ising models~\cite{onsager_1944} are likewise possible, and possibly to 3D as well~\cite{elshowk_2012}. 

Furthermore, we note that analogous to the continuous case proven in Proposition~1, the linearity of reaction operators in the CME extends to the basis function approximations $\tilde{F}$ (regardless of whether $\mathcal{Z}$ is analytically accessible as in the 1D case). 
This requires that the inverse Ising problem has not changed, as discussed further in Section~\ref{sec:3.C}.


\subsection{\label{sec:3.B}Analytic Approximations to Basis Functions of Simple Reaction Motifs}

Figure~\ref{fig:basisfunctions} shows the basis function approximations calculated using the 1D Ising model~(\ref{eq:analyticBasis}) for several simple unimolecular reaction processes. 
Note that the reaction rates/diffusion constant provide an overall multiplicative factor to each process. 
Computer algebra systems can be used to determine these analytic forms, which contain sums on the order of ten to a hundred terms in length, depending on the operator (see Supplementary Information for the code used to generate Figures~\ref{fig:basisfunctions},\ref{fig:trivalent}).


\begin{figure}[!ht]
	\centering
	\includegraphics[width=0.8\columnwidth]{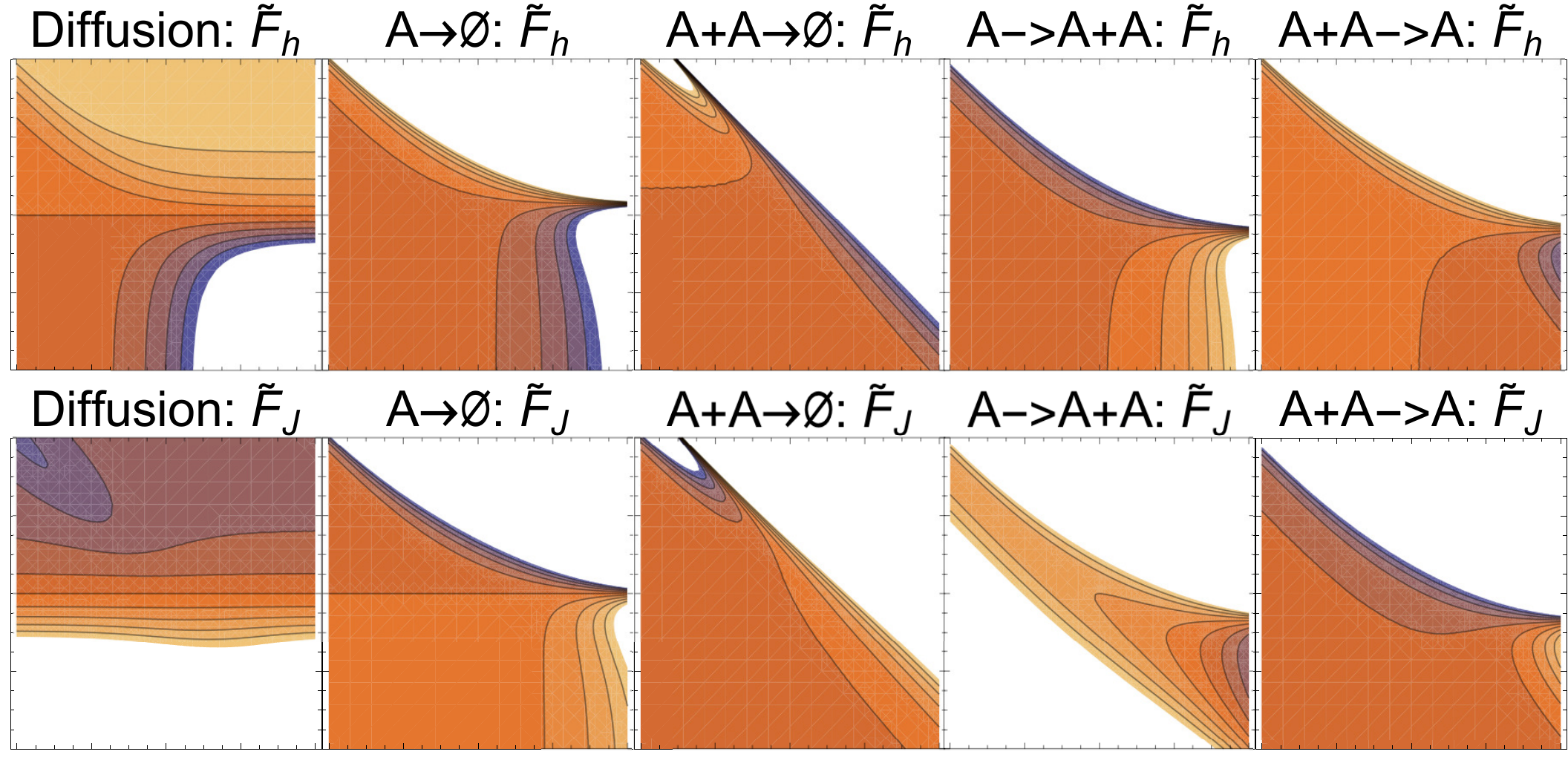}
	\caption{Basis functions~(\ref{eq:analyticBasis}) for several simple reaction schemes in one species. Horizontal, vertical axis: $h,J \in [-4,4]$. The magnitudes have been scaled to $[-1,1]$, since the reaction rate/diffusion constant provides an arbitrary scaling factor.}
	\label{fig:basisfunctions}
\end{figure}

Generalizing these simple systems, we solve for the basis function approximations of the trivalent reaction $A + B \rightarrow C$ with its reverse process $C \rightarrow A + B$. 
This process is fundamentally important as a generalization of many simple biochemical processes, and has been studied extensively~\cite{mjolsness_2013_2,mjolsness_2013_1}. 
For example, it is the building block of the broadly applicable substrate-enzyme-product (SEP) motif $S + E \rightleftharpoons C \rightarrow P + E$, where $S,E,P$ denote the substrate, enzyme, and product (see Section~\ref{sec:3.D} below). 

In the Ising model formalism, the description of this process involves 9 time dependent interaction functions $h_A,h_B,h_C,J_{AA},J_{AB},J_{AC},J_{BB},J_{BC},J_{CC}$, forming the reduced model:
\begin{equation}
\mathcal{Z} = \sum_{\{ s \}} \sum_{\{ \alpha \}} \exp [ \sum_{i=1}^N h_{\alpha_i} (t) s_i + \sum_{i=1}^{N-1} J_{\alpha_i,\alpha_{i+1}}(t) s_i s_{i+1} ]
\end{equation}
where the species label $\alpha_i \in \{ A, B, C \}$, and we implicitly note that the sum $\sum_{\{\alpha\}}$ runs only over occupied sites $s_i = 1$. 
Figure~\ref{fig:trivalent} shows several 2D slices of three of the nine basis function approximations for the forward process $A + B \rightarrow C$.

By including species labels,~(\ref{eq:analyticBasis}) leads to analytic expressions containing on the order of hundreds of terms. 
Here, we used a numerical strategy as described in Appendix~\ref{app:evalbfnum} for evaluating the basis functions over the chosen domain. 
While a computer algebra system may be employed as before, this strategy is computationally faster.


\begin{figure}[!ht]
	\centering
	\includegraphics[width=0.8\columnwidth]{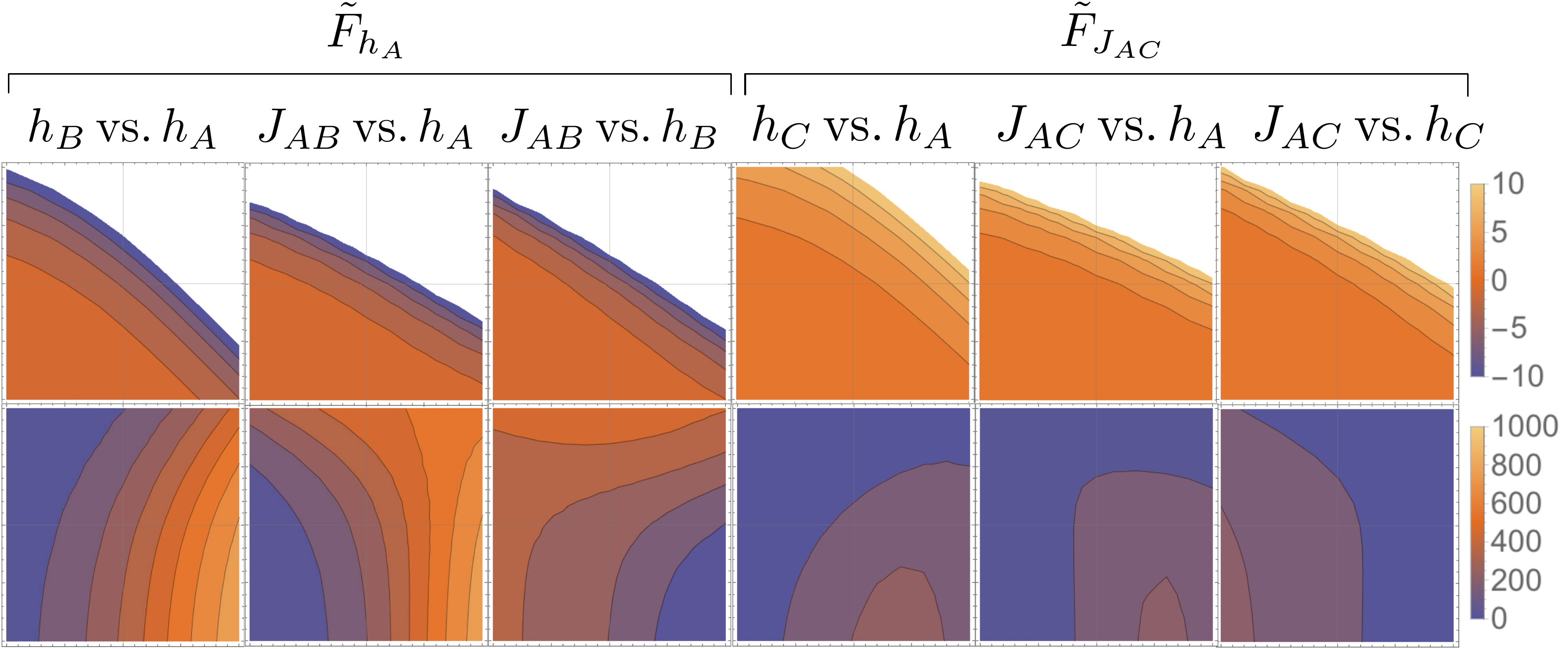}
  \caption{Basis function approximations $\tilde{F}_{h_A},\tilde{F}_{J_{AC}}$ corresponding to the forward trivalent reaction $A + B\rightarrow C$ with rate $k=1$. Each is a 9 dimensional function, of which 2D slices are shown, holding all other parameters at zero. The top row shows the basis function, while the bottom row shows the corresponding moments controlled by these parameters $h_A,J_{AC}$. The chain length used is $N=1000$. The ranges for all horizontal, vertical axes are $[-2,2]$.}
\label{fig:trivalent}
\end{figure}


\subsection{\label{sec:3.C}Boltzmann Machine-Style Learning Algorithm for Dynamics}

The basis function approximations derived above constitute a space of possible reduced dynamics. 
Here, we consider using these analytic insights to describe large spatially distributed reaction networks in 1D. This approach faces two key problems:
\begin{enumerate}
\item For non-linear systems, $\pt$ obeying~(\ref{eq:analyticBasis}) will over time diverge from the MaxEnt distribution consistent with the CME moments due the moment closure approximation made. 
As a fundamental consequence of this moment hierarchy, it is not possible to find exact basis functions over the \textit{entire} interaction parameter space (e.g. $h,J$). 
Another way to see this is that trajectories of the CME system will intersect in $h,J$ space.

However, we postulate that it may be possible to learn approximately well the basis functions for a \textit{single trajectory} (from a single initial condition) which does not self-intersect over some domain. 
This model may be used for extrapolation with reasonable accuracy close to the stochastic trajectory.

\item For large reaction networks, the basis functions are generally not linear in the basis functions of individual processes because the collection of interaction functions is not fixed, violating the assumption in Proposition~1. 
For example, consider the process $A \rightarrow B \rightarrow C$. 
Here, nine basis functions are required to capture all means and nearest neighbor correlations, such that~(\ref{eq:analyticBasis}) is nine dimensional. Denote these by $\bb = {\matr{A}}^{-1} \mb$ where $\bb$ denotes the basis functions, $\mb$ the time evolving moments, and ${\matr{A}}$ the matrix of partition function derivatives.

Next, consider the separate processes $A \rightarrow B$ and $B \rightarrow C$, described by five basis functions each. 
Let these be denoted by $\bb^{(r)} = ({\matr{A}}^{(r)})^{-1} \mb^{(r)}$ for each of the two reactions $r$. 
Clearly, not all nine basis functions in $\bb$ are present in each $\bb^{(r)}$. 
Furthermore, for those that are present in both, it is not necessarily true that the $i$-th basis function is expressible as $\bb_i \neq \bb_j^{(1)} + \bb_k^{(2)}$ for appropriate $j,k$.

Generally, a reaction network involves more interaction parameters than each of the individual processes, such that Proposition~1 does not 
apply. 
It is only for a subset of networks, such as reaction networks in one species, where the linearity in the CME extends \textit{exactly} to the basis functions. 
Regardless, we postulate that many networks may be described \textit{approximately} well by linear combinations of basis functions corresponding to individual processes.
\end{enumerate}

In light of these postulates, we return to the variational problem~(\ref{eq:varProblem}) and its PDE-constraint. 
In the discrete lattice case considered in Section~\ref{sec:3.A}, it becomes for each $\gamma = h,J$:
\begin{equation}
\int_0^\infty d\tp \; \left ( \tilde{\mu}(\tp) - \mu(\tp) \right ) \frac{\delta h(\tp) }{\delta F_\gamma(h,J) } + \int_0^\infty d\tp \; \left ( \tilde{\Delta}(\tp) - \Delta(\tp) \right ) \frac{\delta J(\tp) }{\delta F_\gamma(h,J) } = 0 ,
\label{eq:bmFull}
\end{equation}
where we have used the notation $\mu,\Delta$ to denote the average number of particles, nearest neighbors (NN) over $p$, and similarly $\tilde{\mu},\tilde{\Delta}$ to denote averages over $\pt$.

Here, we exploit the analytic results derived above to simplify this problem and derive an efficient Boltzmann-machine type learning algorithm for the dynamics. 
In particular, we assume that the true basis functions are linear combinations of the approximations derived in Section~\ref{sec:3.B} above, given by:
\begin{equation}
\begin{split}
\frac{dh}{dt} &= F_h (h,J) = \sum_r \theta^{(r)} \tilde{F}_h^{(r)}, \\
\frac{dJ}{dt} &= F_J (h,J) = \sum_r \theta^{(r)} \tilde{F}_J^{(r)}.
\end{split}
\label{eq:bmCons}
\end{equation}
Here, the reaction rates and diffusion constant are all set to unity, such that the coefficients $\theta$ indicate the rates. 
The variational problem now turns into a regular optimization problem for the coefficients $\theta$ that will yield at all times the MaxEnt distribution consistent with the CME moments. 
The optimization problem becomes: Subject to the PDE constraint~(\ref{eq:bmCons}), solve:
\begin{equation}
\int_0^\infty d\tp \; \left ( \tilde{\mu}(\tp) - \mu(\tp) \right ) \frac{\partial h(\tp)}{\partial \theta^{(s)} } + \int_0^\infty d\tp \; \left ( \tilde{\Delta}(\tp) - \Delta(\tp) \right ) \frac{\partial J(\tp)}{\partial \theta^{(s)} } = 0 ,
\label{eq:bmObjective}
\end{equation}
where the derivative terms are given by the solution to the ordinary differential equation system
\begin{widetext}
\begin{equation}
\begin{split}
\frac{\partial}{\partial \tp} \left ( \frac{\partial h(\tp)}{\partial \theta^{(s)} } \right )
=&
\tilde{F}_h^{(s)} + \frac{\partial h(\tp)}{\partial \theta^{(s)}} \sum_r \theta^{(r)} \frac{\partial \tilde{F}_h^{(r)}}{\partial h} 
+ \frac{\partial J(\tp)}{\partial \theta^{(s)}} \sum_r \theta^{(r)} \frac{\partial \tilde{F}_h^{(r)}}{\partial J} ,
\\
\frac{\partial}{\partial \tp} \left ( \frac{\partial J(\tp)}{\partial \theta^{(s)} } \right )
=&
\tilde{F}_J^{(s)} + \frac{\partial h(\tp)}{\partial \theta^{(s)}} \sum_r \theta^{(r)} \frac{\partial \tilde{F}_J^{(r)}}{\partial h} 
+ \frac{\partial J(\tp)}{\partial \theta^{(s)}} \sum_r \theta^{(r)} \frac{\partial \tilde{F}_J^{(r)}}{\partial J} ,
\end{split}
\label{eq:bmVar}
\end{equation}
\end{widetext}
with initial condition $\partial h(0) / \partial \theta^{(s)} = \partial J(0) / \partial \theta^{(s)} = 0$.

Parameter estimation is greatly simpler to solve than the function estimation~(\ref{eq:bmFull}). 
Furthermore, the variational problem~(\ref{eq:bmVar}) is significantly simplified, since $\tilde{F}^{(r)}$ and consequentially its derivatives are analytically accessible. 
We capitalize upon these practical qualities in Algorithm~2, which solves this problem in a Boltzmann-machine learning style approach.

\begin{figure}[t]
\begin{algorithm}[H]
\caption{Boltzmann Machine-Style Learning of Dynamics} \label{alg:2}
\begin{algorithmic}[1]
\algsetblock[Init]{Initialize}{Stop}{4}{1.4em}
\Initialize
\State Initial $\theta^{(r)}$ for all $r$. \;
\State Max. integration time $T$. \;
\State A formula for the learning rate $\lambda$. \;
\State Time-series of lattice spins $\{ s \}(t)$ from stochastic simulations from some known IC $h_0,J_0$. \;
\State\hspace{\algorithmicindent}Fully visible MRF with NN connections and as many units as lattice sites $N$. \;
\While{not converged}
	\LineComment{\textit{Generate trajectory in reduced space:}}
	\State Solve the PDE constraint~(\ref{eq:bmCons}) with IC $h_0,J_0$ for $0 \leq t \leq T$. \;
	\LineComment{\textit{Awake phase:}}
    \State Evaluate true moments $\mu(t),\Delta(t)$ from the stochastic simulation data $\{s\}(t)$.  \;
    \LineComment{\textit{Asleep phase:}}
	\State Evaluate moments $\tilde{\mu}(t),\tilde{\Delta}(t)$ of the Boltzmann distribution by Gibbs sampling. \;
	\LineComment{\textit{Update to decrease objective function:}}
	\State Solve~(\ref{eq:bmVar}) for derivative terms. \;
	\State Update $\theta^{(s)}$ to decrease the objective function for all $s$ by taking: $\theta^{(s)} \rightarrow \theta^{(s)} - \lambda \times (\ref{eq:bmObjective})$. \;
\EndWhile
\end{algorithmic}
\end{algorithm}
\end{figure}

\begin{figure*}[!ht]
	\centering
	\includegraphics[width=1.0\textwidth]{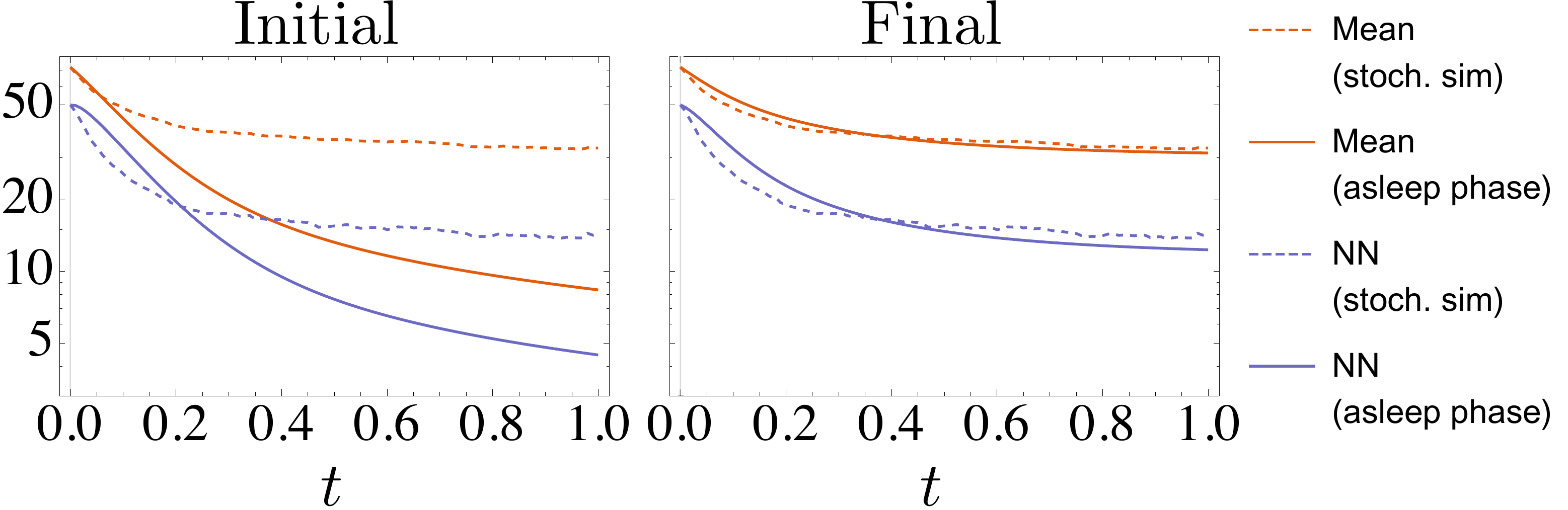}
	\caption{The 
    1$^{\text{st}}$ and 2$^{\text{nd}}$ (mean and NN) moments of the BARW system obtained from stochastic simulation (dashed) and by integrating the PDE constraint~(\ref{eq:bmCons}) and using Gibbs sampling in the asleep phase of Algorithm~2 (solid). \textit{Left:} using initial $\theta_0^{(s)}$ reveals the limitations of moment closure approximation. \textit{Right:} after 400 iterations, the coefficients have adjusted to more accurately capture the true CME dynamics.}
	\label{fig:barwMoments}
\end{figure*}

\begin{figure}[!ht]
	\centering
	\includegraphics[width=0.5\columnwidth]{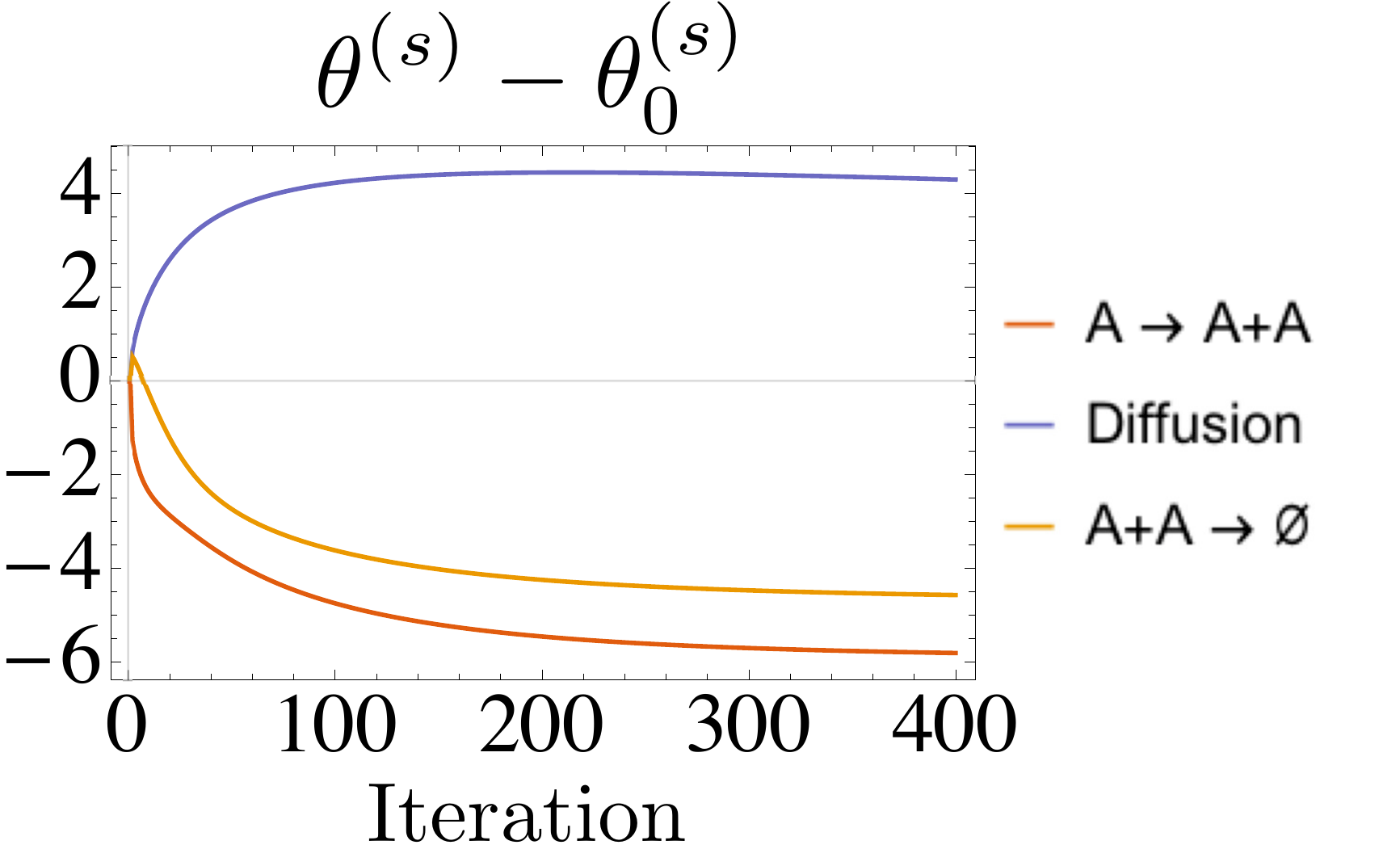}
	\caption{The coefficients in the PDE constraint converging over 400 iterations of Algorithm~2 applied to the BARW system, starting from the parameters used in the stochastic simulations.}
	\label{fig:barwTheta}
\end{figure}

As an illustrative example, we apply Algorithm~2 to a branching and annihilating random walk (BARW) on a 1D lattice, described by the three processes: $A\rightarrow A+A$ with rate $k_b = 10$, $A+A\rightarrow 0$ with rate $k_a = 10$, and diffusion with constant $D=10$. 
Extensive theoretical work has been dedicated to studying BARWs in the context of universality classes, in particular the directed percolation universality class~\cite{takayasu_1992,cardy_1996}.

Stochastic simulations are used to generate training data for this system on a chain of length $N=100$ for maximum time of $T=1$ with timestep $dt=0.01$. 
Here, we follow the numerical procedure described in Ref.~\onlinecite{takayasu_1992}. 
The basis functions used in~(\ref{eq:bmCons}) are those of the three processes present, as shown in Figure~\ref{fig:basisfunctions}. 
The initial coefficients $\theta_0^{(s)}$ used are the known reaction rates. 

Figure~\ref{fig:barwMoments} shows the moments of the BARW system. 
Due to the moment closure problem, the system predicted by solving the constraint equations diverges from the true, even though the true reaction rates are used as coefficients $\theta_0^{(s)}$. 
After running 400 iterations of Algorithm~2, the new coefficients lead to much closer agreement to the true system.

Figure~\ref{fig:barwTheta} shows the coefficients converge over the iterations. 
In particular, the effective rates for bimolecular annihilation and branching have decreased, while the effective diffusion constant has increased. 
Since the final values are sensitive to the initial $\theta_0^{(s)}$ chosen, an $L_2$ regularization term is included in the action. 
A further constraint in Algorithm~2 to keep $\theta^{(s)}$ positive enforces the connection to effective reaction rates.


\subsection{\label{sec:3.D}ANNs for Learning Non-linear Combinations of Basis Functions}

As a more general approach than linear combinations, we use ANNs (artificial neural networks) to describe non-linear combinations of basis functions. 
Consider the SEP system diffusing on a 1D lattice, described by the reactions:
\begin{equation}
S + E \xrightleftharpoons[k_{-1}]{k_1} C \xrightarrow{k_2} P + E .
\end{equation}
The full Ising model for this system consists of four self interactions and 10 NN coupling parameters.

Figure~\ref{fig:sep} shows several moments of this system evolving in time from stochastic simulations. 
Here, the parameters used are: $k_1 = 10,k_{-1}=0.1,k_2=0.5$, max. time $T=1$ with timestep $0.01$, and lattice length $N=100$. 
The system evolves from an initial lattice generated by Gibbs sampling with parameters $h_S=0.5,h_E=1,h_C=-1,h_P=-1$, and all NN terms set to zero.

The input to the ANN are the basis functions for the three separate processes, each of which belongs to the trivalent reaction motif of Figure~\ref{fig:trivalent} thereby contributing 9 basis functions. 
Additionally, the two basis functions for the diffusion of each of the four species is included from Figure~\ref{fig:basisfunctions}, for a total of 35 inputs. 
The other layers in the ANN are two layers of 40 units, and an output layer of 14 units, with $\tanh$ activation functions between each layer. Two thirds of the total length $T$ of the timeseries are used for training. 
These are converted to trajectories in interaction parameter space using Boltzmann machine learning, and smoothed using a low-pass filter before being used to evaluate the 35 input basis functions. 
The corresponding outputs to be learned are the time derivatives of these 14 parameters, also smoothed by a low-pass filter. 

The network learns the dynamics of these parameters to high precision. 
We infer from the fast training times that the usage of these analytic solutions as input greatly reduces the difficulty of training the network from the interaction parameters directly.

Figure~\ref{fig:sep} shows the extrapolated parameters and corresponding moments, compared to the remaining third of the simulation time. 
These extrapolations are generally linear in interaction space, and may diverge quickly, such as for $h_S$. 
However, the moments show considerable robustness to these variations, suggesting that using ANNs for extrapolation is possible. 
This has promising implications for further development in multiscale simulation algorithms.

A further feature learned by the ANN is a moment closure approximation for the dynamics of $J_{SP}$, and the corresponding NN moment it controls. 
This parameter is not included in any of the basis functions or inputs to the ANN. 
The basis function learned, shown in Figure~\ref{fig:sep}, therefore expresses the dynamics of this moment in terms of the interactions made available as input to the network. 
Similar extensions to higher order moments are likewise possible.

\begin{figure*}[!ht]
	\centering
	\includegraphics[width=1.0\textwidth]{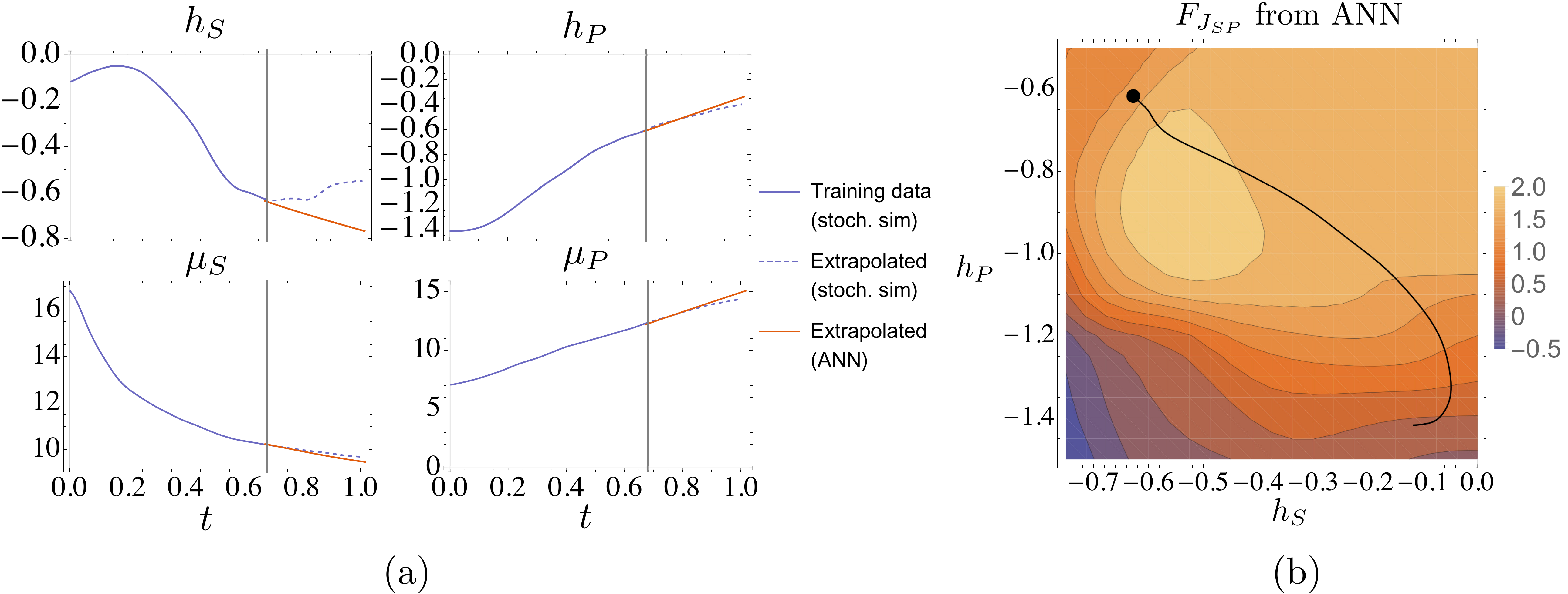}
	\caption{(a) Trajectories of the SEP system from stochastic simulation, and extrapolated values from the trained ANN. The divergence of the predicted and true values in moment space is smaller than in interaction parameter space, suggesting a stability in the observable quantities of the model to small errors. (b) The derivative learned by the ANN for the moment. A two dimensional slice is shown through this 14 dimensional function. The black line shows the trajectory of the training data, while the dot indicates the evaluation point for this slice, chosen at the end of the training data (gray vertical line in (a)). All other parameters than $h_S,h_P$ are held fixed at this point.}
	\label{fig:sep}
\end{figure*}


\section{\label{sec:4}Discussion and Conclusions}

This paper presents a new approach to model reduction of spatial chemical systems. 
Slowly time-evolving MaxEnt models are employed to capture the key correlations in the system. 
This approach is particularly useful for multiscale problems, where different spatial and temporal correlations become more or less relevant over time to accurately describe the system. 
For example, in synaptic level neuroscience, the stochastic influx of signaling molecules in the post-synaptic spine produces complex spatial correlations between ion channels and downstream targets, but these are less relevant during quiescent periods. 
We anticipate that such problems stand to benefit greatly from modeling approaches that are able to adjust which correlations are included to optimize simulation efficiency and accuracy.

A general model that is functional in nature is introduced to describe dynamic Boltzmann distributions. 
This extends and formalizes ideas originally developed in GCCD in Ref.~\onlinecite{johnson_2015} - in particular:
\begin{enumerate}
\item A general variational problem has been formulated to determine the functions in the dynamical system controlling the interaction parameters. This takes the form of a PDE-constrained optimization problem.
\item The reduced model has been extended to capture spatial correlations, with particular relevance to Biological applications. 
By motivating parameterizations of the functionals from analytically solvable cases, practical optimization algorithms for learning the dynamics of spatial systems are made possible.
\item ANNs have been employed to learn \textit{non-linear} combinations of basis functions, derived for individual reaction processes using the aid of computer algebra systems. 
\end{enumerate}

Mapping the chemical system onto a spin lattice allows a direct connection to the more traditionally formulation of a Boltzmann machine. 
Here, the connection to the new learning algorithm is evident in~(\ref{eq:bmFull}), and we anticipate this will suggest numerous further applications to diverse areas of machine learning where estimating the dynamics of a time series is required. 
Including arbitrary spatial correlations beyond NN in the lattice model may be of further interest in pursuit of 3D simulations.

Numerous strategies are possible for improving the efficiency of the PDE-constrained optimization problem formulated here, such as adjoint methods~\cite{giles_2000}.
In this work, we have shown that the complexity of this problem can be greatly reduced by instead learning linear and non-linear combinations of analytically accessible approximations. 
Deconstructing the problem in this way can offer physical insight into a complex reaction system, such as in Section~\ref{sec:3.C} where effective reaction rates are learned. 
Future work in this direction may further explore these principled methods for integrating human intuition with machine inference in the model reduction process.


\section*{Supplementary Material}

See supplementary material for alternate derivations of the differential equation system~(\ref{eq:pdeWM}), and for code used to implement algorithms~1 and~2.


\begin{acknowledgments}

This work was supported by NIH grants R01HD073179 and USAF/DARPA FA8750-14-C-0011 (E.M.) and NIH P41-GM103712 and AFOSR MURI FA9550-18-1-0051 (O.K.E., T.B., T.S.).

\end{acknowledgments}


\appendix



\section{\label{app:varTerm}Derivation of Differential Equation System for Variational Term}



\subsection{\label{app:varTerm:WM}Well-Mixed Case}

Consider the differential equation system~(\ref{eq:adeWM}). Represent the solution as a functional of the basis functions $F$ using the notation
\begin{equation}
\nu_\kp(\tp) = J_\kp [ \{ F \} ] ,
\end{equation}
where $\{ F \} = \{ F_l \; | \; l = 1, \dots, K\}$,
and $J$ results from solving (\ref{eq:adeWM}). 
Further, let $\{ J[ \{ F \} ] \} = \{ J_l [ \{ F \} ] \; | \; l=1,\dots,K \}$, then~(\ref{eq:adeWM}) is:
\begin{equation}
\frac{d}{d\tp} J_\kp [ \{ F \} ] = F_\kp ( \{ J[ \{ F \} ] \} ) .
\end{equation}
To find the variational term $\delta \nu_\kp(\tp) / \delta F_k (\{ \nu \} )$, let $F_k \rightarrow F_k + \epsilon \eta$ using the notation 
\begin{equation}
\{ F^\prime \} = \{ F_l + \delta_{l,k} \epsilon \eta | l = 1, \dots, K \} ,
\end{equation}
then:
\begin{equation}
\begin{split}
\frac{d}{d\tp} J_\kp [ \{ F^\prime \} ] 
= F_\kp ( \{ J[ \{ F^\prime \} ] \} )
+ \delta_{\kp,k} \epsilon \eta ( \{ J[ \{ F^\prime \} ] \} ) .
\end{split}
\end{equation}
Differentiating with respect to $\epsilon$ at $\epsilon=0$ gives:
\begin{equation}
\frac{d}{d\tp} \left ( \frac{d J_\kp [ \{ F^\prime \} ]}{d \epsilon} \Bigg |_{\epsilon=0} \right )
=
\sum_{l=1}^K
\frac{\partial F_\kp(\{ \nu(\tp) \})}{\partial \nu_l(\tp)}
\left ( \frac{d J_l [ \{ F^\prime \} ]}{d \epsilon} \Bigg |_{\epsilon=0} \right )
+ \delta_{\kp,k} \eta ( \{ \nu(\tp) \} ) .
\end{equation}
Substitute the definition of the functional derivative
\begin{equation}
\frac{d J_\kp [ \{ F^\prime \} ]}{d \epsilon} \Bigg |_{\epsilon=0} = \int d \nu_1 \dots \int d \nu_K \; \frac{\delta \nu_\kp(\tp)}{\delta F_k (\{ \nu \})} \eta(\{ \nu \})
\end{equation}
to obtain~(\ref{eq:pdeWM}):
\begin{equation}
\begin{split}
\frac{d}{d\tp} \left ( \frac{\delta \nu_\kp(\tp)}{\delta F_k (\{ \nu \})} \right )
=
\sum_{l=1}^K
\frac{\partial F_\kp(\{ \nu(\tp) \})}{\partial \nu_l(\tp)}
\frac{\delta \nu_l(\tp)}{\delta F_k (\{ \nu \})}
+ \delta_{\kp,k} \delta ( \{ \nu \} - \{ \nu(\tp) \} ) .
\end{split}
\end{equation}


\subsection{\label{app:varTerm:diff}Spatially Heterogeneous Example: Diffusion in 1D}

Consider a diffusion process in 1D, described by single basis functional parameterized according to:
\begin{equation}
\frac{d}{d\tp} \nu(\yp,\tp) = F[\nu (\yp, \tp) ]
F^{(1)} (\nu(\yp,\tp)) \left ( \partial_{\yp} \nu(\yp,\tp) \right )^2 
+ F^{(2)} (\nu(\yp,\tp)) \left ( \partial_{\yp}^2 \nu(\yp,\tp) \right ) .
\end{equation}
Use the functional notation:
\begin{equation}
\nu (\yp,\tp) = J [ \{ F \} ] ,
\end{equation}
where $\{ F \} = \{ F^{(1)},F^{(2)} \}$
and $J$ results from solving (\ref{eq:adeLocal}), then:
\begin{equation}
\frac{d}{dt} J [\{ F \}]
=
F^{(1)} ( J [\{ F \}] ) \left ( \partial_{\yp} J [\{ F \}] \right )^2
+ F^{(2)} ( J [\{ F \}] ) \partial_{\yp}^2 J [\{ F \}] .
\end{equation}

To find the variational term $\delta \nu(\yp,\tp) / \delta F^{(\gamma)} (\omega)$ for $\gamma = 1,2$, let $F^{(\gamma)} \rightarrow F^{(\gamma)} + \epsilon \eta$. Use the notation:
\begin{equation}
\{ F^\prime \} = \{ F^{(1)} + \delta_{\gamma,1} \epsilon \eta, F^{(2)} + \delta_{\gamma,2} \epsilon \eta \} ,
\end{equation}
then
\begin{equation}
\begin{split}
\frac{d}{dt} J [\{ F^\prime \}]
=&
F^{(1)} ( J [\{ F^\prime \}] ) \left ( \partial_{\yp} J [\{ F^\prime \}] \right )^2
+ F^{(2)} ( J [\{ F^\prime \}] ) \partial_{\yp}^2 J [\{ F^\prime \}]
\\ & \hspace{0mm}
+ \delta_{\gamma,1} \epsilon \eta ( J [\{ F^\prime \}] ) \left ( \partial_{\yp} J [\{ F^\prime \}] \right )^2
+ \delta_{\gamma,2} \epsilon \eta ( J [\{ F^\prime \}] ) \partial_{\yp}^2 J [\{ F^\prime \}] .
\end{split}
\end{equation}
Take the derivative with respect to $\epsilon$ at $\epsilon=0$:
\begin{equation}
\begin{split}
\frac{d}{dt} \left ( \frac{d J [\{ F^\prime \}]}{d \epsilon} \Bigg |_{\epsilon=0} \right )
&=
\left (
\frac{\partial F^{(1)} (\nu) }{ \partial \nu }
\left ( \partial_{\yp} \nu \right )^2
+
\frac{\partial F^{(2)} (\nu) }{ \partial \nu }
\partial_{\yp}^2 \nu
\right )
\left ( \frac{d J [\{ F^\prime \}]}{d \epsilon} \Bigg |_{\epsilon=0} \right )
+
\Bigg ( \delta_{\gamma,1} \left ( \partial_{\yp} \nu \right )^2
+
\delta_{\gamma,2} \partial_{\yp}^2 \nu 
\Bigg ) 
\eta ( \nu )
\\ & \hspace{0mm}
+ 2 F^{(1)} (\nu) \partial_{\yp} \nu \frac{\partial}{\partial \yp}
\left ( \frac{d J [\{ F^\prime \}]}{d \epsilon} \Bigg |_{\epsilon=0} \right )
+ F^{(2)} (\nu) \frac{\partial^2}{\partial {\yp}^2}
\left ( \frac{d J [\{ F^\prime \}]}{d \epsilon} \Bigg |_{\epsilon=0} \right ) ,
\end{split}
\end{equation}
where $\nu = \nu(\yp,\tp)$ everywhere. Substituting the definition of the functional derivative
\begin{equation}
\frac{d J [\{ F^\prime \}]}{d \epsilon} \Bigg |_{\epsilon=0} = \int d \omega \; \frac{\delta \nu(\yp,\tp) }{\delta F^{(\gamma)}(\omega) } \eta(\omega)
\end{equation}
gives
\begin{equation}
\begin{split}
\frac{d}{dt} \left ( \frac{\delta \nu }{\delta F^{(\gamma)}(\omega) } \right )
=&
\left (
\frac{\partial F^{(1)} (\nu) }{ \partial \nu }
\left ( \partial_{\yp} \nu \right )^2
+
\frac{\partial F^{(2)} (\nu) }{ \partial \nu }
\partial_{\yp}^2 \nu
\right )
\left ( \frac{\delta \nu }{\delta F^{(\gamma)}(\omega) } \right )
+
\Bigg ( 
\delta_{\gamma,1} \left ( \partial_{\yp} \nu \right )^2
+
\delta_{\gamma,2} \partial_{\yp}^2 \nu
\Bigg )
\delta ( \nu - \omega )
\\ & \hspace{0mm}
+ 2 F^{(1)} (\nu) \partial_{\yp} \nu \frac{\partial}{\partial \yp}
\left ( \frac{\delta \nu }{\delta F^{(\gamma)}(\omega) } \right )
+ F^{(2)} (\nu) \frac{\partial^2}{\partial {\yp}^2}
\left ( \frac{\delta \nu }{\delta F^{(\gamma)}(\omega) } \right ) .
\end{split}
\label{eq:app:varTermsDiff}
\end{equation}



\section{\label{app:evalbfnum}Evaluating Basis Functions Numerically}


To compute the basis functions numerically using~(\ref{eq:analyticBasis}), an efficient method is possible if the eigenvalues of the transfer matrix $M$ are singular. Let the eigenvalues be $\lambda_i$ with corresponding eigenvectors $\boldsymbol{u}_i$. Define:
\begin{equation}
p_{ij}(\alpha) = {\boldsymbol u}_i^\intercal (\partial_\alpha M) {\boldsymbol u}_j^\intercal
\end{equation}
for $\alpha=h,J$, where $\partial_\alpha M$ denotes component-wise differentiation of $M$. Also note that $p_{ij}(\alpha)=p_{ji}(\alpha)$ is symmetric. Then the derivatives of the eigenvalues are given by:~\cite{lancaster_1964}
\begin{equation}
\begin{split}
\partial_\alpha \lambda_i 
&= p_{ii}(\alpha) , \\
\partial_\alpha \partial_\beta \lambda_i 
&= {\boldsymbol u}_i^\intercal (\partial_\alpha \partial_\beta M) {\boldsymbol u}_i + 2 \sum_{j \neq i} \frac{p_{ij}(\alpha) p_{ij}(\beta) }{\lambda_i - \lambda_j} ,
\end{split}
\end{equation}
for $\beta=h,J$. The principle advantage of this approach lies in the fact that the analytic expressions for $\partial_\alpha M$ and $\partial_\alpha \partial_\beta M$ are simpler to derive than differentiating the analytic expressions for the eigenvalues $\lambda$.

It is now straightforward to numerically evaluate the components $\partial_\alpha \partial_\beta \ln \mathcal{Z}$ of~(\ref{eq:analyticBasis}) in the thermodynamic limit $\ln \mathcal{Z} \approx N \ln \lambda_+$, where $\lambda_+$ is the largest eigenvalue of the transfer matrix and $N$ the length of the chain.


\nocite{*}
\bibliography{bibliography}

\end{document}